\newcommand{\ex}[1]{\mathbb{E}\left[ #1 \right] }
\newcommand{\norm}[1]{\left\lVert #1 \right\rVert}
\newcommand{\Q}{ {\mathbf Q}}
\newcommand{\Qc}{ {\mathbf Q}_{\perp}}
\newcommand{\Qp}{ {\mathbf Q}_{\parallel}}
\newcommand{\A}{ {\mathbf A}}
\newcommand{\s}{ {\mathbf S}}
\newcommand{\UU}{ {\mathbf U}}
\newcommand{\inner}[2]{\langle #1, #2 \rangle}
\newcommand{\lep}[1]{\mathop  \le \limits^{(#1)}}
\newcommand{\ep}[1]{\mathop  = \limits^{(#1)}}
\newcommand{\norms}[1]{\big\lVert #1 \big\rVert}
\DeclareMathOperator*{\argmin}{arg\,min}
\newtheorem{definition}{Definition}
\newtheorem{lemma}{Lemma}
\newtheorem{claim}{Claim}
\newtheorem{proposition}{Proposition}
\newtheorem{theorem}{Theorem}
\newtheorem{example}{Example}
\newtheorem{remark}{Remark}
\theoremstyle{definition}
\begin{document}






%

\title{Asymptotically Optimal Load Balancing in Large-scale Heterogeneous Systems with Multiple Dispatchers}
\author{Xingyu Zhou\\Department of ECE\\The Ohio State University\\zhou.2055@osu.edu 
\and Ness Shroff\\Department of ECE and CSE\\The Ohio State University\\
shroff.11@osu.edu \and Adam Wierman\\Department of Computing and Mathematical Sciences\\California Institute of Technology\\adamw@caltech.edu\\} 
\date{}
\maketitle

\begin{abstract}

We consider the load balancing problem in large-scale heterogeneous systems with multiple dispatchers. We introduce a general framework called Local-Estimation-Driven (LED). Under this framework, each dispatcher keeps local (possibly outdated) estimates of queue lengths for all the servers, and the dispatching decision is made purely based on these local estimates. The local estimates are updated via infrequent communications between dispatchers and servers. We derive sufficient conditions for LED policies to achieve throughput optimality and delay optimality in heavy-traffic, respectively. These conditions directly imply delay optimality for many previous local-memory based policies in heavy traffic. Moreover, the results enable us to design new delay optimal policies for heterogeneous systems with multiple dispatchers. Finally, the heavy-traffic delay optimality of the LED framework directly resolves a recent open problem on how to design optimal load balancing schemes using delayed information.

\end{abstract}


\maketitle


\section{Introduction}
Load balancing, which is responsible for dispatching jobs on parallel servers, has attracted significant interest in recent years. This is motivated by the challenges associated with efficiently dispatching jobs in large-scale data centers and cloud applications, which are rapidly increasing in size. A good load balancing policy not only ensures high throughput by maximizing server utilization, but improves the user experience by minimizing delay.

There have been numerous load balancing policies proposed in the literature. The most straightforward one is Join-Shortest-Queue (JSQ), which has been shown to enjoy optimal delay in both non-asymptotic (for homogeneous servers) and asymptotic regimes~\cite{weber1978optimal,foschini1978basic,eryilmaz2012asymptotically}. However, it is difficult to implement in today's large-scale data centers due to the large message overhead between the dispatcher and servers. As a result, alternative load balancing policies with low message overhead have been proposed. For example, the Power-of-$d$ policy~\cite{mitzenmacher2001power} has been shown to achieve optimal average delay in heavy traffic with only $2d$ messages per arrival~\cite{maguluri2014heavy}. Another common load balancing policy is the pull-based Join-Idle-Queue (JIQ)~\cite{lu2011join,stolyar2015pull}, which has been shown to outperform the Power-of-$d$ policy using less overhead. However, both Power-of-$d$ and JIQ mainly achieve good performance for systems with homogeneous servers. 
Recently, some works consider heterogeneous servers and propose flexible and low message overhead policies that achieve optimal delay in heavy traffic~\cite{zhou2017designing,zhou2018heavy}. However, only a single dispatcher is considered in these works. 
Theoretical analysis of load balancing with multiple dispatchers has mainly focused on the JIQ policy so far~\cite{mitzenmacher2016analyzing,stolyar2017pull}, which has a poor performance in heavy traffic and is even generally unstable for heterogeneous systems~\cite{zhou2017designing}.

Note that heterogeneous systems with multiple dispatchers are now almost the default scenarios in today's cloud infrastructures. On one hand, the heterogeneity comes from the usage of multiple generations of CPUs and various types of devices~\cite{govindan2016evolve}. On the other hand, with the massive amount of data, a scalable cloud infrastructure needs multiple dispatchers to increase both throughput and robustness~\cite{shuff2016building}. 

Motivated by this, a recent work~\cite{anonymous2018load} proposes a new framework named Loosely-Shortest-Queue (LSQ) for designing load balancing policies for heterogeneous systems with multiple dispatchers. In particular, under this framework, each dispatcher keeps its own, local, and possibly outdated view of each server's queue length. Upon arrival, each dispatcher routes to the server with shortest local view. A small amount of message overhead is used to update the local view. The authors successfully establish sufficient conditions on the update scheme for the system to be stable. Moreover, extensive simulations were conducted to show that LSQ policies significantly outperform well-known low-communication policies while using similar communication overhead in both heterogeneous and homogeneous cases. However, no theoretical guarantees on the delay performance are provided. It is worth noting that the key challenge for establishing a delay performance guarantee for this framework is that it only uses possibly outdated local information to dispatch jobs. In fact, the problem of designing delay optimal load balancing schemes that only have access to delayed information has recently been listed as an open problem in~\cite{lipshutz2019open}.

Inspired by this, in this paper, we are particularly interested in the following questions: \emph{Is is possible to establish delay performance guarantees for load balancing in heterogeneous systems with multiple dispatchers? If so, can these guarantees be achieved using only delayed information?} 

\textbf{Contributions.} To answer the questions above, we propose a general framework of load balancing for heterogeneous systems with multiple dispatchers that uses only delayed (out-of-date) information about the system state. We call this framework Local-Estimation-Driven (LED) and it generalizes the LSQ framework. Our main results provide sufficient conditions for LED policies to be both throughput optimal and delay optimal in heavy-traffic. Our key contributions can be summarized as follows.

First, we introduce the LED framework for designing load balancing policies for heterogeneous systems with multiple dispatchers. In this framework, each dispatcher keeps its own local estimates of queue lengths for all the servers, and makes its dispatching decision based purely on its own local estimates according to a certain dispatching strategy. The local estimates are updated infrequently via an update strategy that is based on communications between dispatchers and servers. 

Second, we derive sufficient conditions for LED policies to be throughput optimal and delay optimal in heavy-traffic. The importance of the sufficient conditions is three-fold: (i) It can be shown that previous local-memory based policies (e.g., LSQ) satisfy our sufficient conditions. As a result, we are able to show that they are not only throughput optimal (in a stronger sense) but also delay optimal in heavy-traffic. (ii) The conditions allow us to design new delay optimal load balancing policies with zero dispatching delay and low message overhead that work for heterogeneous servers and multiple dispatchers. (iii) These conditions also provide us with a systematic approach for generalizing previous optimal policies to the case of multiple dispatchers and exploring the trade-off between memory (i.e., local estimations) and message overhead.

Third, the LED framework also resolves the open problem posed in~\cite{lipshutz2019open}, which asks how to design heavy-traffic delay optimal policies that only use delayed information. Our main results for LED policies not only demonstrate that it is possible to achieve optimal delay in heavy-traffic via only delayed information, but highlight conditions on the extent to which old information is useful. Moreover, they provide methods for using the delayed information to achieve optimality in heavy traffic. Interestingly, the LED framework also shows that, in the case of multiple dispatchers, inaccurate information can actually lead to improved performance.

{To establish the main results, we need to address the following two key challenges. First, each dispatcher in our model has access to delayed and outdated system information. Moreover, each dispatcher does not know the arrivals to servers from other dispatchers, since there is no communication between them. As a result, for throughput optimality, we have to
 carefully design our Lyapunov function, since the local estimates can also be unbounded. For delay optimality, we consider two queueing systems: a local-estimation system and the actual system. Then, we have to transfer the negative drift on the local-estimation system to the actual system, which requires establishing new bounds and the analysis of sample paths.}

\textbf{Related work.}
 The study of efficient load balancing algorithms has been a hot topic for a long time and spans across different asymptotic regimes. The most extensively investigated policy might be Join-Shortest-Queue (JSQ), under which the incoming jobs are always sent to the server with the shortest queue length. JSQ has been shown to be optimal in a stochastic order sense in the non-asymptotic regime for arbitrary arrival and non-decreasing failure rate identical service processes~\cite{weber1978optimal, winston1977optimality}. In the heavy-traffic asymptotic regime, in which the normalized load approaches one and the number of servers is fixed, JSQ has been proved to achieve optimal delay even for heterogeneous servers using both diffusion approximations in~\cite{foschini1978basic} and the recently proposed drift-based method~\cite{eryilmaz2012asymptotically}. However, the optimality of JSQ comes at the cost of a large amount of communication between dispatchers and servers, which is particularly undesirable for large-scale data centers. Thus, some popular low-message overhead alternative policies have been proposed, e.g., Power-of-$d$ and Join-Idle-Queue (JIQ). Under Power-of-$d$, the dispatcher only needs to sample $d \ge 2$ servers and sends arrivals to the server with the shortest queue length among the $d$ samples. This simple policy has been shown to enjoy a doubly exponential decay rate in response time in the large-system asymptotic regime~\cite{mitzenmacher2001power} and achieve optimal delay in heavy-traffic for homogeneous servers~\cite{chen2012asymptotic, maguluri2014heavy}. Another low-message overhead policy is JIQ (or Pull-based policy)~\cite{lu2011join,stolyar2015pull}, under which arrivals are sent to one of the idle servers, if there are any, and to a randomly selected server otherwise. Compared to JSQ and Power-of-$d$, JIQ has the nice property of zero dispatching delay since each arrival can be instantaneously routed rather than waiting for feedback from servers. Moreover, JIQ has been shown to outperform Power-of-$d$ with even smaller message overhead (at most one per job). In particular, under JIQ, arriving jobs achieve asymptotic zero waiting time in the large-system regime while Power-of-$d$ does not. An even stronger result suggests that, in the Halfin-Whitt asymptotic regime, JIQ achieves the same delay performance as JSQ~\cite{mukherjee2016load}. Nevertheless, the performance of JIQ drops substantially in heavy traffic with a finite number of servers, even for homogeneous servers. In fact, it is not heavy-traffic delay optimal in this case~\cite{zhou2017designing}. Motivated by this, recent works have proposed alternative pull-based policies that not only enjoy all the nice features of JIQ but also achieve optimal delay in heavy-traffic~\cite{zhou2017designing,zhou2018heavy}. However, these studies only consider the case of single dispatcher. 

Compared to the large literature on the single dispatcher case, there are only a few works for the scenario of multiple dispatchers, and they mainly focus on the JIQ policy. In particular, \cite{mitzenmacher2016analyzing} presents a new large-system asymptotic analysis of JIQ without the simplifying assumptions in~\cite{lu2011join}. The property of asymptotically zero waiting time of JIQ was generalized to the case of multiple dispatchers in~\cite{stolyar2017pull}. However, the results for JIQ in~\cite{lu2011join, mitzenmacher2016analyzing, stolyar2017pull} all assume that the loads at various dispatchers are strictly equal. Without this assumption, \cite{van2017load} shows that the waiting time under JIQ no longer vanishes in the large-system regime and two enhanced JIQ schemes are proposed. As mentioned earlier, although JIQ is a scalable choice for the multiple-dispatcher case, it is not delay optimal in heavy traffic for homogeneous servers and not even generally stable for heterogeneous systems.

The case of heterogeneous systems with multiple dispatchers has received very little attention from the theoretical community so far. To the best of our knowledge, the framework proposed in~\cite{anonymous2018load} is the first attempt to study efficient load balancing schemes with a theoretical guarantee for the scenario of heterogeneous systems with multiple dispatchers. In particular, under the proposed Loosely-Shortest-Queue (LSQ) framework, each dispatcher independently keeps its own local view of sever queue lengths and routes jobs to the shortest among them. Communication is used only to update the local views and make sure that they are not too far from the real queue lengths. The main contributions of~\cite{anonymous2018load} are the sufficient conditions for any LSQ policy to achieve strong stability with low message overhead. Additionally, extensive simulations have been used to demonstrate its appeal. Nevertheless, a theoretical guarantee on the delay performance of LSQ policies remains an important unsolved question. 

It is worth pointing out that the idea of using local memory to hold possibly old information for load balancing was also explored in two recent works~\cite{jonatha2018power,van2019hyper}. As we discuss later, these two proposed policies are in our LED framework. Both works only consider a single dispatcher and homogeneous servers, which is also a special case of our model. Further, their analysis focuses on the large-system asymptotic regime where the number of servers goes to infinity, while our analysis deals with a finite number of servers.

\section{System Model and Preliminaries}
This section describes the system model and assumptions considered in this paper. Then, several necessary preliminaries are presented.
\subsection{System model}
We consider a discrete-time (i.e., time-slotted) load balancing system consisting of $M$ dispatchers and $N$ possibly-heterogeneous servers. Each server maintains an infinite capacity FIFO queue. At each dispatcher, there is a local memory, through which the dispatcher can have some (possibly delayed) information about the system states. In each time-slot, the central dispatcher routes the new incoming tasks to one of the servers, immediately upon arrival. Once a task joins a queue, it remains in that queue until its service is completed. Each server is assumed to be work conserving, i.e., a server is idle if and only if its corresponding queue is empty.

\subsubsection{Arrivals} Let $A^m(t)$ denote the number of exogenous tasks that arrive at dispatcher $m$ at the beginning of time-slot $t$. We assume that $A_{\Sigma}(t) = \sum_{m=1}^M A^m(t)$ is an integer-valued random variable, which is \emph{i.i.d.} across time-slots. The mean and variance of $A_{\Sigma}(t)$ are denoted by $\lambda_{\Sigma}$ and $\sigma_{\Sigma}^2$, respectively. We further assume that there is a positive probability that $A_{\Sigma}(t)$ is zero. The allocation of total arriving tasks among the $M$ dispatchers is allowed to use any arbitrary policy that is independent of system states. Note that, in contrast to previous works on multiple dispatchers~\cite{lu2011join,mitzenmacher2016analyzing,stolyar2017pull}, we do not require that the loads at all dispatchers are equal. We assume that there is a strictly positive probability for tasks to arrive at each dispatcher at any time-slot $t$. That is, there exists a strictly positive constant $p_0$ such that 
\begin{align}
 	\mathbb{P}\left( A^m(t) > 0\right) \ge p_0, \quad \forall (m,t)\in \mathcal{M}\times\mathbb{N},
\end{align} 
where $\mathcal{M}=\{1,2,\ldots,M\}$. Moreover, we assume that $A^m(t)$ is $\emph{i.i.d}$ across time-slots with mean arrival rate denoted by $\lambda_m$.
We further let $A^m_n(t)$ denote the number of new arrivals at server $n$ from dispatcher $m$ at the beginning of time-slot $t$. Let $A_n(t) = \sum_{m=1}^MA_n^m(t)$ be the total number of arriving tasks at server $n$ at the beginning of time-slot $t$.

\subsubsection{Service} Let $S_n(t)$ denote the amount of service that server $n$ offers for queue $n$ in time-slot $t$. That is, $S_n(t)$ is the maximum number of tasks that can be completed by server $n$ at time-slot $t$. We assume that $S_n(t)$ is an integer-valued random variable, which is \emph{i.i.d.} across time-slots. We also assume that $S_n(t)$ is independent across different servers as well as the arrival process. The mean and variance of $S_n(t)$ are denoted as $\mu_n$ and $\nu_n^2$, respectively. Let $\mu_{\Sigma} \triangleq \Sigma_{n=1}^N \mu_n$ and $\nu_{\Sigma}^2 \triangleq \Sigma_{n=1}^N \nu_n^2$ denote the mean and variance of the hypothetical total service process $S_{\Sigma}(t) \triangleq \sum_{n=1}^N S_n(t)$. Let $\epsilon = \mu_{\Sigma} - \lambda_{\Sigma}$ characterize the distance between the arrival rate and the boundary of capacity region.


\subsubsection{Queue Dynamics} Let $Q_n(t)$ be the queue length of server $n$ at the beginning of time slot $t$. 
Let $A_n(t)$ denote the number of tasks routed to queue $n$ at the beginning of time-slot $t$ according to the dispatching decision. 
Then the evolution of the length of queue $n$ is given by 
\begin{equation}
	\label{eq:Qdynamic}
	Q_n(t+1) = Q_n(t) + A_n(t) - S_n(t) + U_n(t), n = 1,2,\ldots, N,
\end{equation}
where $U_n(t) = \max\{S_n(t)-Q_n(t)-A_n(t),0\}$ is the unused service due to an empty queue.

We do not assume any specific distribution for arrival and service processes. Moreover, in contrast to previous works~\cite{zhou2017designing,eryilmaz2012asymptotically}, we do not require that both arrival and service processes have a finite support. Instead, we only need the condition that their distributions are light-tailed. More specifically, we assume that 
\begin{align}
\label{ass:light-tail}
		\ex{e^{\theta_1 A_{\Sigma}(t)} } \le D_1 \text{ and } \ex{e^{\theta_2 S_{n}(t)} } \le D_2,
\end{align}
for each $n$ where the constants $\theta_1 > 0$, $\theta_2 > 0$, $D_1 <\infty$ and $D_2 <\infty$ are all independent of $\epsilon$.

\subsection{Local-Estimation-Driven (LED) framework}
We are interested in the case that the local memory at each dispatcher $m$ stores an estimate of the queue length for each server $n$. In particular, we let $\widetilde{Q}^m_n(t)$ be the local estimate of the queue length for server $n$ from dispatcher $m$ at the beginning of time-slot $t$ (before any arrivals and departures). More specifically, we introduce the following framework for load balancing.

\begin{definition}
	A Local-Estimation-Driven (LED) policy is composed of the following components:
	\begin{enumerate}[(a)]
		\item \textbf{Dispatching strategy:} At the beginning of each time-slot, each dispatcher $m$ chooses one of the servers for new arrivals \textit{purely} based on its local estimates (i.e., local queue length estimates $\widetilde{\Q}^m$) 
		\item \textbf{Update strategy:} At the end of each time-slot, each dispatcher would possibly update its local estimates, e.g., synchronize local queue length estimate with the true queue length.
	\end{enumerate}
\end{definition}

The definition of LED is broad, and it includes a variety of classical load balancing policies. For example, it can be seen to include LSQ policy studied in~\cite{anonymous2018load}, by choosing the dispatching strategy to be that new arrivals at each dispatcher are dispatched to the queue with the shortest local estimate. Moreover, it also includes two recent local memory based policies in~\cite{jonatha2018power,van2019hyper} that are developed for the case of single dispatcher and homogeneous servers.

To study LED, we model the system as a discrete-time Markov chain $\{Z(t) = (\Q(t),m(t)), t\ge 0\}$ with state space $\mathcal{Z}$, using the queue length vector $\Q(t)$ together with the memory state $m(t) \triangleq (\widetilde{\Q}^1(t), \widetilde{\Q}^2(t),\ldots, \widetilde{\Q}^m(t))$. We consider a set of load balancing systems $\{Z^{(\epsilon)}(t), t\ge 0\}$ parameterized by $\epsilon$ such that the mean arrival rate of the total exogenous arrival process $\{A_{\Sigma}^{(\epsilon)}(t), t\ge 0\}$ is $\lambda_{\Sigma}^{(\epsilon)} = \mu_\Sigma - \epsilon$. Note that the parameter $\epsilon$ characterizes the distance between the arrival rate and the boundary of the capacity region. We are interested in the throughput performance and the steady-state delay performance in the heavy-traffic regime under any LED policy.

A load balancing system is stable if the Markov chain $\{Z(t), t\ge 0\}$ is positive recurrent, and $\overline{Z} = \{\overline{\Q}, \overline{m}\}$ denotes the random vector whose distribution is the same as the steady-state distribution of $\{Z(t), t\ge 0\}$. We have the following definition.
{
\begin{definition}[Throughput Optimality]
	A load balancing policy is said to be throughput optimal if for any arrival rate within the capacity region, i.e., for any $\epsilon > 0$, the system is positive recurrent and all the moments of $\big\lVert{\overline{\Q}^{(\epsilon)}}\big\rVert$ are finite.
\end{definition}

Note that this is a stronger definition of throughput optimality than that in~\cite{anonymous2018load,wang2016maptask,xie2016scheduling} because, besides the positive recurrence, it also requires all the moments to be finite in steady state for any arrival rate within the capacity region.}

To characterize the steady-state average delay performance in the heavy-traffic regime when $\epsilon$ approaches zero, by Little's law, it is sufficient to focus on the summation of all the queue lengths. First, recall the following fundamental lower bound on the expected sum queue lengths in a load balancing system under any throughput optimal policy \cite{eryilmaz2012asymptotically}. Note that this result was originally proved with the assumption of finite support on the service process (Lemma 5 in~\cite{eryilmaz2012asymptotically}), which can be generalized to service processes with light-tailed distributions with a careful analysis of the unused service, see our proof of Lemma~\ref{lem:unused_service}.

\begin{lemma}
\label{lem:lower_bound}
    Given any throughput optimal policy and assuming that $(\sigma_{\Sigma}^{(\epsilon)})^2$ converges to a constant $\sigma_{\Sigma}^2$ as $\epsilon$ decreases to zero, then 
	\begin{equation}
	\label{eq:lower_bound}
		\liminf_{\epsilon \downarrow 0} \epsilon \ex{\sum_{n=1}^N \overline{Q}_n^{(\epsilon)} } \ge \frac{\zeta}{2},
	\end{equation}
	where $\zeta \triangleq \sigma_{\Sigma}^2 + \nu_{\Sigma}^2$.
\end{lemma}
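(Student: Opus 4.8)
The plan is to reduce the claim to a single‑server (bottleneck) queue and then run a steady‑state Lyapunov‑drift argument on that queue. Summing the dynamics \eqref{eq:Qdynamic} over $n$ gives $Q_\Sigma(t+1)=Q_\Sigma(t)+A_\Sigma(t)-S_\Sigma(t)+U_\Sigma(t)$ with $Q_\Sigma(t):=\sum_n Q_n(t)$, $U_\Sigma(t):=\sum_n U_n(t)\ge 0$, and $A_\Sigma(t)=\sum_n A_n(t)$. Since $x\mapsto x^+$ is subadditive, $U_\Sigma(t)=\sum_n (S_n(t)-Q_n(t)-A_n(t))^+\ge (S_\Sigma(t)-Q_\Sigma(t)-A_\Sigma(t))^+$, so $Q_\Sigma(t+1)\ge\max\{Q_\Sigma(t)+A_\Sigma(t)-S_\Sigma(t),\,0\}$ along every sample path, for any LED policy. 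Coupling $Q_\Sigma$ with the single‑server queue $\Phi$ driven by the same $(A_\Sigma(t),S_\Sigma(t))$, started from $\Phi(0)=Q_\Sigma(0)$ and obeying $\Phi(t+1)=\max\{\Phi(t)+A_\Sigma(t)-S_\Sigma(t),\,0\}$, monotonicity of this recursion gives $Q_\Sigma(t)\ge\Phi(t)$ for all $t$. The chain $\Phi$ is a reflected random walk with i.i.d.\ light‑tailed increments of negative mean $-\epsilon$, hence positive recurrent; letting $t\to\infty$ (the actual chain is positive recurrent by throughput optimality) yields $\overline{Q}_\Sigma^{(\epsilon)}\succeq_{st}\overline{\Phi}^{(\epsilon)}$, so $\mathbb{E}[\sum_n \overline{Q}_n^{(\epsilon)}]\ge\mathbb{E}[\overline{\Phi}^{(\epsilon)}]$, and moreover $\mathbb{E}[(\overline{\Phi}^{(\epsilon)})^2]\le\mathbb{E}[(\overline{Q}_\Sigma^{(\epsilon)})^2]<\infty$ (again by throughput optimality). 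Thus it suffices to prove $\liminf_{\epsilon\downarrow0}\epsilon\,\mathbb{E}[\overline{\Phi}^{(\epsilon)}]\ge\zeta/2$.

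Next I would take the drift of $\Phi(t)^2$ in steady state. Writing $X(t):=A_\Sigma^{(\epsilon)}(t)-S_\Sigma(t)$ (so $\mathbb{E}[X(t)]=-\epsilon$ and $\mathbb{E}[X(t)^2]=\zeta_\epsilon+\epsilon^2$, with $\zeta_\epsilon:=(\sigma_\Sigma^{(\epsilon)})^2+\nu_\Sigma^2$) and $\widetilde{U}(t):=(S_\Sigma(t)-\Phi(t)-A_\Sigma(t))^+$ for the reflection term, so that $\Phi(t+1)=\Phi(t)+X(t)+\widetilde{U}(t)$, I would expand $\mathbb{E}[\Phi(t+1)^2]=\mathbb{E}[\Phi(t)^2]$ using: (i) independence of $\Phi(t)$ from $(A_\Sigma(t),S_\Sigma(t))$, which gives $\mathbb{E}[\Phi(t)X(t)]=-\epsilon\,\mathbb{E}[\overline{\Phi}^{(\epsilon)}]$; and (ii) the elementary identity $X(t)\widetilde{U}(t)=-\widetilde{U}(t)^2-\Phi(t)\widetilde{U}(t)$, valid because on $\{\widetilde{U}(t)>0\}$ one has $\widetilde{U}(t)=-\Phi(t)-X(t)$. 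After substituting, the $\Phi\widetilde{U}$ cross terms cancel and one is left with the clean identity
\[ 2\epsilon\,\mathbb{E}[\overline{\Phi}^{(\epsilon)}] \;=\; \zeta_\epsilon + \epsilon^2 - \mathbb{E}[\widetilde{U}(t)^2], \]
where the first‑moment balance $\mathbb{E}[\widetilde{U}(t)]=\epsilon$ is used along the way, and all expectations are finite because $\mathbb{E}[(\overline{\Phi}^{(\epsilon)})^2]<\infty$ and $0\le\widetilde{U}(t)\le S_\Sigma(t)$ has finite moments by \eqref{ass:light-tail}.

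The remaining and only substantive step is to show $\mathbb{E}[\widetilde{U}(t)^2]=o(1)$ as $\epsilon\downarrow0$; this is precisely where the finite‑support assumption of \cite{eryilmaz2012asymptotically} is traded for the light‑tail condition, analogously to the unused‑service estimate behind Lemma~\ref{lem:unused_service}. Since $\widetilde{U}(t)$ is a nonnegative integer with $\mathbb{E}[\widetilde{U}(t)]=\epsilon$, it follows that $\mathbb{P}(\widetilde{U}(t)>0)\le\epsilon$; and since $\widetilde{U}(t)^2\le S_\Sigma(t)^2\,\mathbbm{1}\{\widetilde{U}(t)>0\}$, Cauchy--Schwarz gives $\mathbb{E}[\widetilde{U}(t)^2]\le\sqrt{\mathbb{E}[S_\Sigma(t)^4]}\,\sqrt{\mathbb{P}(\widetilde{U}(t)>0)}\le C\sqrt{\epsilon}$, with $C:=\sqrt{\mathbb{E}[S_\Sigma(t)^4]}<\infty$ a constant independent of $\epsilon$ (finiteness of $\mathbb{E}[S_\Sigma(t)^4]$ follows from \eqref{ass:light-tail}). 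Substituting into the identity gives $2\epsilon\,\mathbb{E}[\overline{\Phi}^{(\epsilon)}]\ge\zeta_\epsilon-C\sqrt{\epsilon}$; taking $\liminf_{\epsilon\downarrow0}$ and using $\zeta_\epsilon\to\zeta$ yields $\liminf_{\epsilon\downarrow0}\epsilon\,\mathbb{E}[\overline{\Phi}^{(\epsilon)}]\ge\zeta/2$, and combined with $\mathbb{E}[\sum_n\overline{Q}_n^{(\epsilon)}]\ge\mathbb{E}[\overline{\Phi}^{(\epsilon)}]$ this proves the lemma. I expect this unused‑service estimate to be the main obstacle: with bounded service it is immediate ($\widetilde{U}\le S_{\max}$ forces $\mathbb{E}[\widetilde{U}^2]\le S_{\max}^2\,\epsilon$), whereas under only the light‑tail assumption it requires the moment/Cauchy--Schwarz substitution above; the rest is a routine monotone coupling together with a standard second‑moment drift computation.
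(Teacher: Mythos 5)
Your proof is correct and follows essentially the route the paper intends: it reduces to the resource-pooled single-server queue, runs the standard steady-state drift of $\Phi^2$ (as in Lemma 5 of the cited work of Eryilmaz and Srikant), and replaces the finite-support assumption by a light-tail estimate on the unused service, which is exactly the extension the paper delegates to its Lemma~\ref{lem:unused_service}. Your Cauchy--Schwarz bound $\ex{\widetilde{U}^2}\le C\sqrt{\epsilon}$ is a slightly more quantitative variant of the paper's truncation argument ($\ex{\overline{U}_n^2}\le \epsilon S' + \beta$), and both suffice.
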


The right-hand-side of Eq. \eqref{eq:lower_bound} is the heavy-traffic limit of a hypothesized single-server system with arrival process $A_\Sigma^{(\epsilon)}(t)$ and service process $\sum_n^N S_n(t)$ for all $t\ge0$. This hypothetical single-server queueing system is often called the \textit{resource-pooled system}. Since a task cannot be moved from one queue to another in the load balancing system, it is easy to see that the expected sum queue lengths of the load balancing system is larger than the expected queue length in the resource-pooled system. However, if a policy achieves the lower bound in Eq. \eqref{eq:lower_bound} in the heavy-traffic limit, based on Little's law this policy achieves the minimum average delay of the system in steady-state, and thus said to be heavy-traffic delay optimal, see~\cite{eryilmaz2012asymptotically,maguluri2014heavy,wang2016maptask,xie2015priority,xie2016scheduling,zhou2017designing}.

\begin{definition}[Heavy-traffic Delay Optimality in Steady-state]
	A load balancing scheme is said to be heavy-traffic delay optimal in steady-state if the steady-state queue length vector $\overline{\Q}^{(\epsilon)}$ satisfies 
	\begin{equation*}
		\limsup_{\epsilon \downarrow 0} \epsilon \ex{\sum_{n=1}^N \overline{Q}_n^{(\epsilon)} } \le \frac{\zeta}{2},
	\end{equation*}
	where $\zeta$ is defined in Lemma \ref{lem:lower_bound}.
\end{definition}

\subsection{Dispatching Preference}
In order to provide a unified way to specify the dispatching strategy in LED, we first introduce a concept called \emph{dispatching preference}. In particular, let $P^m_n(t)$ be the probability that new arrivals at dispatcher $m$ are dispatched to server $n$ at time-slot $t$. We define $\beta_n^m(t) \triangleq P^m_n(t) - \frac{\mu_n}{\mu_{\Sigma}}$, which is the difference in probability that server $n$ will be chosen under a particular dispatching strategy and random routing (weighted by service rate). Then, we have the following definition.

\begin{definition}[Dispatching preference]
	Fix a dispatcher $m$, let $\sigma_t(\cdot)$ be a permutation of $(1,2,\ldots,N)$ that satisfies 
\begin{align*}
	\widetilde{Q}^m_{\sigma_t(1)}(t) \le \widetilde{Q}^m_{\sigma_t(2)}(t) \le \ldots \le \widetilde{Q}^m_{\sigma_t(N)}(t).
\end{align*}
The dispatching preference at dispatcher $m$ is a $N$-dimensional vector denoted by $\Delta^m(t)$, the $n$th component of which is given by $\Delta^m_n(t) \triangleq \beta^m_{\sigma_t(n)}$.
\end{definition}
In words, the dispatching preference at a dispatcher $m$ specifies how servers with different local estimates are preferred in a unified way such that it is independent of the actual values of local estimates. It only depends on the relative order of local estimates. More specifically, fix a dispatcher $m$, by definition we can see that weighted random routing strategy has no preference for any servers and $\Delta_n^m(t) = 0$ for any $n$. On the other hand, if new arrivals are always dispatched to the server with the shortest local estimate (e.g, LSQ policy), we have $\Delta^m_1(t) > 0$ and $\Delta^m_n(t) < 0$ for all $2\le n\le N$. Thus, we can see that a positive value for $\Delta_n^m(t)$ means that the dispatching strategy has a preference for the server with the $n$th shortest local estimation. This observation directly motivates the following two definitions.

\begin{definition}[Tilted dispatching strategy]
	A dispatching strategy adopted at dispatcher $m$ is said to be \textit{tilted} if there exists a $k \in \{2,3,\ldots N\}$ such that for all $t$, $\Delta_n^m(t) \ge 0$ for all $n\le k$ and $\Delta_n^m(t) \le 0$ for all $n \ge k$.
\end{definition}

\begin{definition}[$\delta$-tilted dispatching strategy]
	A dispatching strategy adopted at dispatcher $m$ is said to be \textit{$\delta$-tilted} if for all $t$ (i) it is a tilted dispatching strategy and (ii) there exists a positive constant $\delta$ such that $\Delta_1^m(t) \ge \delta$ and $\Delta_N^m(t) \le -\delta$.
\end{definition}

{
\begin{remark}
	Note that similar definitions were first provided in~\cite{zhou2017designing} for the case of a single dispatcher with up-to-date information. Based on these definitions, sufficient conditions were presented for throughput and heavy-traffic optimality. However, these conditions cannot be directly applied to our model due to the following two major challenges. One is that, in our model, each dispatcher only has access to outdated information. The other is that each dispatcher has no idea of the arrivals at the servers coming from other dispatchers, since there is no communication between them. To handle these challenges, we have to develop new techniques.
\end{remark}
}

We end this section by providing intuitions behind the two definitions. To start, it can be seen easily that $\sum_{n=1}^N \Delta_n^m(t) = 0$ for all $m$ and $t$ via the definition of dispatching preference. Roughly speaking, a tilted dispatching strategy means that compared to (weighted) random routing (which does not have any preference), the probabilities of choosing servers with shorter local estimates (the first $k$ shortest ones) are increased, and, as a result, the probabilities of choosing servers with longer local estimates are reduced. This is the reason why we call it tilted, since more preference is given to queues with shorter local estimates. Therefore, a tilted dispatching strategy can be viewed as a strategy that is as least as `good' as (weighted) random routing. On the other hand, a $\delta$-tilted dispatching strategy can be viewed as a strategy that is strictly better than (weighted) random routing. The reason is that, besides the fact that it is tilted, it also requires that there is a strictly positive preference of the server with the shortest local estimation. 

\section{Main Results}
In this section, we first present the sufficient conditions for LED policies to be throughput optimal and heavy-traffic delay optimal. Then, we explore several example policies within LED framework to demonstrate its flexibility in designing new load balancing schemes. 

\subsection{Sufficient Conditions}
Let us begin with the sufficient conditions for LED policies to be throughput optimal. In particular, we specify conditions for the dispatching strategy and update strategy that guarantee throughput optimality.

To state the theorem, we need the following notation. Let $\mathcal{I}^m_n(t)$ be an indicator function which equals $1$ if and only if the local estimate of server $n$'s queue length at dispatcher $m$ gets updated, i.e., the estimated queue length $\widetilde{Q}^m_n(t)$ is set to the actual queue length $Q_n(t)$ at the end of time-slot $t$. 

\begin{theorem}
\label{thm:throughput}
	Consider an LED policy. Suppose the dispatching strategy at each dispatcher is tilted and the update strategy can guarantee the condition that there exists a positive constant $p$ such that
	\begin{align}
	\label{eq:throughput_condition}
		\ex{\mathcal{I}^m_n(t)\mid Z(t) = Z} \ge p 
	\end{align}
	holds for all $Z$ and $(m,n,t) \in \mathcal{M}\times\mathcal{N}\times \mathbb{N}$. 
	Then, this policy is throughput optimal, i.e., the system under this policy is positive recurrent with all the moments being bounded for any $\epsilon>0$.
\end{theorem}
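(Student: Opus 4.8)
I would argue by Lyapunov drift, using a Lyapunov function that controls the true queues and the estimation errors $e^m_n(t)\triangleq|\widetilde Q^m_n(t)-Q_n(t)|$ at the same time --- the errors have to be tracked explicitly because, as noted earlier, the local estimates may themselves be unbounded. Fix $\epsilon>0$, take $V(Z)\triangleq\sum_{n}Q_n^2+\kappa\sum_{m,n}(e^m_n)^2$ for a large constant $\kappa>0$ to be chosen, and write $F(t)\triangleq\sum_{m,n}(e^m_n(t))^2$; I assume the chain is irreducible and aperiodic (aperiodicity follows from $\mathbb P(A_\Sigma(t)=0)>0$). Expanding \eqref{eq:Qdynamic}, the quadratic terms and the unused-service cross terms $\ex{Q_n(t)U_n(t)\mid Z(t)}$ are bounded by an $\epsilon$-independent constant, since $U_n\le S_n$ and the arrival and service processes are light-tailed by \eqref{ass:light-tail}. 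The remaining term is $2\sum_n Q_n(t)\big(\ex{A_n(t)\mid Z(t)}-\mu_n\big)$, and because dispatching is purely local, $\ex{A_n(t)\mid Z(t)}=\sum_m\lambda_m P^m_n(t)=\tfrac{\mu_n}{\mu_\Sigma}\lambda_\Sigma+\sum_m\lambda_m\beta^m_n(t)$, so this term equals $-\tfrac{2\epsilon}{\mu_\Sigma}\sum_n\mu_nQ_n(t)+2\sum_m\lambda_m\sum_n Q_n(t)\beta^m_n(t)$; the first piece is $\le-c_1\epsilon\|\Q(t)\|_1$. For the second, substitute $Q_n=\widetilde Q^m_n+(Q_n-\widetilde Q^m_n)$: the $\widetilde Q^m_n$-part is $\sum_n\widetilde Q^m_{\sigma_t(n)}(t)\Delta^m_n(t)$, which is $\le0$ by tiltedness --- subtract the pivot $\widetilde Q^m_{\sigma_t(k)}(t)$, use $\sum_n\Delta^m_n(t)=0$, and observe that each summand $\big(\widetilde Q^m_{\sigma_t(n)}-\widetilde Q^m_{\sigma_t(k)}\big)\Delta^m_n$ is a product of opposite-sign factors --- while the remainder is $\le\sum_ne^m_n(t)$ because $|\beta^m_n|\le1$. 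This gives
\begin{equation*}
\ex{\sum_n Q_n(t+1)^2-\sum_n Q_n(t)^2\mid Z(t)}\le K_1-c_1\epsilon\|\Q(t)\|_1+c_2\sum_{m,n}e^m_n(t).
\end{equation*}

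Next I would bound the drift of $F$. When $\mathcal I^m_n(t)=1$ the estimate is resynchronized with the true queue, so $e^m_n(t+1)$ is at most a one-slot queue change (light-tailed, order one); when $\mathcal I^m_n(t)=0$ the estimate changes by a light-tailed amount, so $e^m_n(t+1)\le e^m_n(t)+R$ with $R$ stochastically dominated by a fixed function of arrivals and services. Writing $(e^m_n(t+1))^2\le(e^m_n(t))^2-\mathbbm{1}\{\mathcal I^m_n(t)=1\}(e^m_n(t))^2+(\text{light-tailed terms})$, taking $\ex{\cdot\mid Z(t)}$, using condition \eqref{eq:throughput_condition}, and absorbing the cross term via $2\bar R\,e^m_n(t)\le\tfrac p2(e^m_n(t))^2+O(1)$, I obtain $\ex{F(t+1)-F(t)\mid Z(t)}\le-\tfrac p2F(t)+K_3$. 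Since $c_2\sum_{m,n}e^m_n(t)\le c_2\sqrt{MN}\,\sqrt{F(t)}\le\tfrac{\kappa p}{4}F(t)+O(1/\kappa)$, taking $\kappa$ large and adding $\kappa$ times this bound to the previous display yields
\begin{equation*}
\ex{V(Z(t+1))-V(Z(t))\mid Z(t)}\le K^{\ast}-c_1\epsilon\|\Q(t)\|_1-\frac{\kappa p}{4}F(t).
\end{equation*}
Because $Z(t)=(\Q(t),\widetilde{\Q}(t))$ and $\widetilde{\Q}(t)$ is a function of $\Q(t)$ and the errors, the right-hand side tends to $-\infty$ as $\|Z(t)\|\to\infty$; hence the drift is $\le-1$ outside a finite set, and Foster's criterion gives positive recurrence.

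To upgrade positive recurrence to finiteness of all steady-state moments, I would apply the drift inequality to $W(Z)\triangleq\sqrt{V(Z)}$. Using $\|\Q\|_1\ge\|\Q\|_2$, the fact that $F\ge\|e\|_2$ once $\|e\|_2\ge1$, and equivalence of the relevant norms on $Z$, the last display gives $\ex{V(Z(t+1))-V(Z(t))\mid Z(t)}\le K^{\ast\ast}-\gamma_0\sqrt{V(Z(t))}$; then Jensen's inequality $\ex{\sqrt{V(Z(t+1))}\mid Z(t)}\le\sqrt{\ex{V(Z(t+1))\mid Z(t)}}$ turns this into a uniform negative drift $\ex{W(Z(t+1))-W(Z(t))\mid Z(t)}\le-\gamma_0/4$ whenever $W(Z(t))$ is large. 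It then remains to bound $\ex{e^{\eta(W(Z(t+1))-W(Z(t)))}\mid Z(t)}$ uniformly in $Z(t)$ for some $\eta>0$; the key point is that the \emph{positive} part of the increment is light-tailed and state-independent --- splitting on whether $V(Z(t))$ exceeds the square of the one-slot exogenous input $X(t)\triangleq A_\Sigma(t)+\sum_nS_n(t)$ shows $(W(Z(t+1))-W(Z(t)))^{+}\le c_3X(t)$ in every case, and $X(t)$ has a finite exponential moment by \eqref{ass:light-tail}; the (possibly large) negative jumps caused by resynchronizations are harmless. The standard exponential-tail drift lemma (Hajek's lemma) then yields $\ex{e^{\theta W(\overline Z)}}<\infty$ for some $\theta>0$, hence $\ex{e^{\theta'\|\overline{\Q}^{(\epsilon)}\|}}<\infty$ for some $\theta'>0$, so all moments of $\|\overline{\Q}^{(\epsilon)}\|$ (and of the whole state) are finite.

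The two steps I expect to require real work are the following. First, the tiltedness inequality $\sum_n\widetilde Q^m_{\sigma_t(n)}(t)\Delta^m_n(t)\le0$ must survive the replacement of $\Q$ by each dispatcher's outdated estimate $\widetilde{\Q}^m$ at the cost of only the additive penalty $\sum_{m,n}e^m_n(t)$, which is then dominated by the quadratic error drift; making this work is exactly what dictates the combined choice of $V$ and the use of condition \eqref{eq:throughput_condition} to force the error drift to be strictly negative. Second, in the moment step, unlike in models with bounded (finite-support) increments, a single resynchronization can move $\widetilde Q^m_n$ by an unbounded amount, so one cannot use a two-sided bound on the increments of $W$; the remedy is to work with $\sqrt V$ and to verify the one-sided exponential-moment condition above, which is where the light-tail assumption \eqref{ass:light-tail} (rather than finite support) is used.
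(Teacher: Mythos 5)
Your positive-recurrence argument is correct and takes a genuinely different route from the paper's. You use a one-step drift of $\sum_n Q_n^2+\kappa\sum_{m,n}(e^m_n)^2$ and absorb the cross term $c_2\sum_{m,n}e^m_n$ into the geometric contraction of the squared errors; the paper instead uses the linear penalty $\sum_m\norm{\Q-\widetilde{\Q}^m}_1$ together with a $T$-step drift and a separate lemma (Lemma~\ref{lem:boundness}) asserting that the time-averaged error over a window of length $T$ is bounded by a constant. Your quadratic-error route avoids that time-average lemma entirely, which is a real simplification for the recurrence step. The decomposition of $\ex{\inner{\Q}{\A-\s}}$ into the tilted term, the error term, and the $-\epsilon$ term is the same as the paper's $\mathcal{T}_1,\mathcal{T}_2,\mathcal{T}_3$.

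The gap is in the upgrade to all moments. You invoke a ``standard exponential-tail drift lemma'' for $W=\sqrt{V}$ under only a one-sided increment condition, asserting that the large negative jumps caused by resynchronization are harmless. They are not: negative conditional mean drift plus a light-tailed \emph{positive} part does not imply $\ex{e^{\theta(W(t+1)-W(t))}\mid Z(t)}\le\rho<1$, which is the engine of Hajek's lemma, nor does it imply finiteness of all stationary moments. A reflected chain that moves up by $1$ with probability $1-q_k$ and crashes to the origin with probability $q_k=2/(k+2)$ has uniformly negative drift and bounded positive increments, yet its stationary tail is only polynomial of order $k^{-2}$. Your setting is structurally of this type in the states where $V$ is large solely because some $e^m_n$ is large: there the entire negative drift of $W$ is carried by the resynchronization crash. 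What saves you in principle is that the crash probability is bounded below by $p$ uniformly in the state (unlike $q_k\to 0$ in the counterexample), so one can show $\ex{e^{\theta\Delta W}\mid Z}\le(1-p)e^{O(\theta)}+p\,e^{-\theta\cdot(\text{large})}<1$ for $\theta$ small relative to $p$ --- but this is a bespoke case analysis on which coordinate of $V$ dominates, not a citation to a standard lemma, and you have not supplied it. The paper sidesteps the issue entirely by applying its exponential-moment lemma (Lemma~\ref{lem:basis}) to $V=\norm{\Q}$ \emph{alone}, whose increments are two-sided dominated by $\sum_t(\norm{\A(t)}+2\norm{\s(t)})$, and by making the $T$-step drift of $\norm{\Q}^2$ uniformly negative via Lemma~\ref{lem:boundness}; the error coordinates never enter the Lyapunov function used for the moment bound. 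To close your argument you should either carry out the case analysis above, or switch to $\norm{\Q}$ with a multi-step drift and use the contraction $\ex{e^m_n(t+1)}\le(1-p)\,e^m_n(t)+O(1)$ to control the accumulated cross term.
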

\begin{proof}
	See Section~\ref{sec:proof_throughput}
\end{proof}
Note that this theorem directly implies that LSQ is not only strongly stable but also enables the system to have all the moments bounded in steady-state. Moreover, it suggests that any dispatching strategy that is as good as (weighted) random routing is sufficient to guarantee throughput optimality. Further, the update probability can be a function of the traffic load. 

Now, we turn to presenting the sufficient conditions for LED policies to be delay optimal in heavy traffic. In order to achieve delay optimality, we need stronger conditions on both the dispatching strategy and the update strategy. 

\begin{theorem}
\label{thm:delay}
	Consider an LED policy. Suppose the dispatching strategy at each dispatcher is $\delta$-tilted with a uniform lower bound $\delta >0$ being independent of $\epsilon$. Suppose the update strategy can guarantee that there exists a positive constant $p$ (independent of $\epsilon$) such that
	\begin{align}
	\label{eq:delay}
		\ex{\mathcal{I}^m_n(t)\mid Z(t) = Z} \ge p 
	\end{align}
	holds for all $Z$ and $(m,n,t) \in \mathcal{M}\times\mathcal{N}\times \mathbb{N}$. 
	Then, this policy is heavy-traffic delay optimal.
\end{theorem}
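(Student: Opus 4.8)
The plan is the standard two‑step heavy‑traffic program --- establish state‑space collapse of $\overline{\Q}^{(\epsilon)}$ onto the all‑ones direction $\mathbf{1}$, then feed this into the drift of a quadratic Lyapunov function --- with one new ingredient: dispatching is driven by the stale local estimates $\widetilde{\Q}^m$ rather than by $\Q$, so the estimation error must be controlled first. Since a $\delta$‑tilted strategy is in particular tilted, Theorem~\ref{thm:throughput} already gives positive recurrence of $\{Z^{(\epsilon)}(t)\}$ and $\ex{\norm{\overline{\Q}^{(\epsilon)}}^r}<\infty$ for all $r$ and all $\epsilon>0$, which will let us equate steady‑state drifts to zero below.

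\emph{Step 1 (uniform control of the estimation gap).} I would show that $\ex{|\widetilde{Q}^m_n(t)-Q_n(t)|^r}$ is bounded by a constant independent of $\epsilon$, for every $r$ and $(m,n)$. Condition~\eqref{eq:delay} forces the time elapsed since the last synchronization of estimate $(m,n)$ to be stochastically dominated by a $\mathrm{Geometric}(p)$ random variable, and between synchronizations $Q_n$ changes by per‑slot increments of absolute value at most $A_{\Sigma}(s)+S_n(s)$, which are light‑tailed uniformly in $\epsilon$ by~\eqref{ass:light-tail}; a geometric sum of such increments has all moments finite uniformly in $\epsilon$. (If the update strategy also modifies an estimate between synchronizations, the extra drift is bounded the same way.)

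\emph{Step 2 (state‑space collapse --- the main obstacle).} Write $\Q=\Q_{\parallel}+\Q_{\perp}$, with $\Q_{\parallel}$ the projection of $\Q$ onto the line spanned by $\mathbf{1}$. I claim $\ex{\norm{\overline{\Q}_{\perp}^{(\epsilon)}}^r}$ is bounded uniformly in $\epsilon$ for every $r$, which I would obtain from a uniformly negative drift of $V(Z)=\norm{\Q_{\perp}}$: there are $\delta'>0$ and $B<\infty$ independent of $\epsilon$ with $\ex{V(Z(t+1))-V(Z(t))\mid Z(t)=Z}\le-\delta'/2$ whenever $\norm{\Q_{\perp}}\ge B$, and light‑tailed increments. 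In the one‑step drift, $\ex{\A(t)-\s(t)+\UU(t)\mid Z}=-\tfrac{\epsilon}{\mu_{\Sigma}}\mu+\mathbf{b}(t)+\ex{\UU(t)\mid Z}$ with routing bias $\mathbf{b}_n(t)=\sum_m\lambda_m\beta^m_n(t)$; the contribution of $-\tfrac{\epsilon}{\mu_{\Sigma}}\mu$ to the drift of $\norm{\Q_{\perp}}$ is $O(\epsilon)$, that of the unused‑service term is $O(1)$ when $\norm{\Q_{\perp}}$ is large (a long queue wastes no service, cf.\ Lemma~\ref{lem:unused_service}), and the second‑order term is $O(1)/\norm{\Q_{\perp}}$, so everything reduces to $\inner{\Q_{\perp}}{\mathbf{b}(t)}\le-\delta'\norm{\Q_{\perp}}+C$. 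For each $m$, an Abel‑summation argument against the preference vector $\Delta^m(t)$ --- whose partial sums are all $\ge\delta$ on $\{1,\dots,N-1\}$ because the strategy is $\delta$‑tilted --- shows $\inner{\mathbf{x}_{\perp}}{\beta^m}\le-\tfrac{\delta}{\sqrt{N}}(\max_n x_n-\min_n x_n)$ for any $\mathbf{x}$ sorted in the order that defines $\beta^m$; applied to $\mathbf{x}=\widetilde{\Q}^m$ this gives $\inner{(\widetilde{\Q}^m)_{\perp}}{\beta^m}\le-\tfrac{\delta}{\sqrt{N}}\norm{(\widetilde{\Q}^m)_{\perp}}$, the single‑dispatcher estimate of~\cite{zhou2017designing}. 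Writing $\inner{\Q_{\perp}}{\beta^m}=\inner{(\widetilde{\Q}^m)_{\perp}}{\beta^m}+\inner{(\Q-\widetilde{\Q}^m)_{\perp}}{\beta^m}$, bounding the last inner product by $\norm{\Q-\widetilde{\Q}^m}$, using $\norm{(\widetilde{\Q}^m)_{\perp}}\ge\norm{\Q_{\perp}}-\norm{\Q-\widetilde{\Q}^m}$, and summing over $m$ weighted by $\lambda_m$,
\begin{align*}
\inner{\Q_{\perp}}{\mathbf{b}(t)}\ \le\ -\frac{\delta\lambda_{\Sigma}}{\sqrt{N}}\norm{\Q_{\perp}}\ +\ C\sum_{m}\lambda_m\norm{\Q(t)-\widetilde{\Q}^m(t)}.
\end{align*}
The residual term is exactly the obstacle: at states where $\norm{\Q_{\perp}}$ is large but some estimation gap is comparably large, the one‑step drift need not be negative. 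The resolution is to run the Foster--Lyapunov argument on the sample‑path event that all estimation gaps stay $O(1)$, with the complementary event controlled using Step~1; this ``two‑system'' transfer from the local‑estimation vectors $\widetilde{\Q}^m$ to the true queue vector $\Q$, together with the coupling between the two through the $U_n(t)$ terms, is where the bulk of the new work lies.

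\emph{Step 3 (collapse $\Rightarrow$ delay optimality).} Finally I would take $W(Z)=\big(\sum_n Q_n\big)^2$. Since $\sum_n A_n(t)=A_{\Sigma}(t)$ independently of the routing, $\sum_n Q_n$ evolves as the resource‑pooled queue plus the correction $\sum_n U_n(t)$; equating the steady‑state drift of $W$ to zero and using $\ex{A_{\Sigma}^{(\epsilon)}(t)-S_{\Sigma}(t)}=-\epsilon$ gives
\begin{align*}
2\epsilon\,\ex{\textstyle\sum_{n}\overline{Q}_{n}^{(\epsilon)}}\ =\ \zeta\ +\ 2\,\ex{\Big(\textstyle\sum_{n}\overline{Q}_{n}^{(\epsilon)}\Big)\Big(\textstyle\sum_{n}\overline{U}_{n}^{(\epsilon)}\Big)}\ +\ o(1),
\end{align*}
where the $o(1)$ collects the $\epsilon^2$ term and the lower‑order variance terms in $\sum_n U_n$; these vanish because $\sum_n U_n\le S_{\Sigma}$ is light‑tailed uniformly in $\epsilon$ while $\ex{\sum_n\overline{U}_n^{(\epsilon)}}=\epsilon$ by steady‑state flow balance, so $\ex{\big(\sum_n\overline{U}_n^{(\epsilon)}\big)^2}\to0$ (this is the content invoked from Lemma~\ref{lem:unused_service}). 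On $\{\sum_{n}\overline{U}_{n}^{(\epsilon)}>0\}$ some queue is $O(1)$, so state‑space collapse (Step~2) forces $\sum_{n}\overline{Q}_{n}^{(\epsilon)}\le C+C\norm{\overline{\Q}_{\perp}^{(\epsilon)}}$; combined with the moment bounds from Steps~1--2 and $\ex{\big(\sum_n\overline{U}_n^{(\epsilon)}\big)^2}\to0$, a Cauchy--Schwarz estimate makes the cross term $o(1)$ as well. Hence $\limsup_{\epsilon\downarrow0}\epsilon\,\ex{\sum_{n}\overline{Q}_{n}^{(\epsilon)}}\le\zeta/2$, which together with Lemma~\ref{lem:lower_bound} is precisely heavy‑traffic delay optimality.
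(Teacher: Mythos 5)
Your overall architecture (state-space collapse of $\overline{\Q}^{(\epsilon)}_{\perp}$, vanishing unused service, then a Cauchy--Schwarz bound on the cross term $\ex{\norm{\overline{\Q}^{(\epsilon)}(t+1)}_1\norm{\overline{\UU}^{(\epsilon)}(t)}_1}$) matches the paper's, and your Steps 1 and 3 are essentially sound. The gap is in Step 2, and you have located it yourself without closing it. The one-step drift claim $\ex{V(Z(t+1))-V(Z(t))\mid Z(t)=Z}\le-\delta'/2$ for all $Z$ with $\norm{\Q_{\perp}}\ge B$ is false as stated: the estimation gaps $\norm{\Q-\widetilde{\Q}^m}$ are coordinates of the state $Z$, so there are states with $\norm{\Q_{\perp}}$ large and gaps of comparable or larger order, at which your residual term $C\sum_m\lambda_m\norm{\Q(t)-\widetilde{\Q}^m(t)}$ overwhelms the negative term. ``Running Foster--Lyapunov on the sample-path event that all gaps stay $O(1)$'' is not a drift argument: the Hajek-type criterion (Lemma~\ref{lem:basis}) requires the drift bound to hold at \emph{every} state with $V$ large, not merely on a high-probability event, and your Step 1 delivers unconditional steady-state moment bounds on the gaps, which cannot be substituted into a drift inequality conditioned on the current state.

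The paper closes this hole with two devices you would need to reproduce. First, it uses a $T$-step drift and bounds the accumulated residual $\sum_{t=t_0}^{t_0+T-1}\ex{|Q_n(t)-\widetilde{Q}_n^m(t)|\mid Z(t_0)=Z}$ by $LT$ with $L$ independent of $\epsilon$ (Lemma~\ref{lem:boundness}), so the estimation error enters condition (C1) only as an additive constant rather than as a state-dependent term. Second---and this is what your Abel-summation bound $\inner{(\widetilde{\Q}^m)_{\perp}}{\beta^m}\le-\frac{\delta}{\sqrt{N}}\norm{(\widetilde{\Q}^m)_{\perp}}$ does not by itself provide---the negative drift must be expressed in terms of the \emph{true} spread $\norm{\Q_{\perp}}$; the paper achieves this transfer with a sample-path argument (Lemma~\ref{claim:neg_drift}) conditioning on the events that the server with the largest true queue is updated at the end of slot $t_0$ and the one with the smallest true queue at the end of slot $t_0+1$, each with probability at least $p$, which is why the drift coefficient degrades to $\lambda_{\Sigma}\delta p^2/\sqrt{N}$. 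Your alternative transfer $\norm{(\widetilde{\Q}^m)_{\perp}}\ge\norm{\Q_{\perp}}-\norm{\Q-\widetilde{\Q}^m}$ works only if the gap term can be absorbed, which returns you to the unresolved obstacle. Until Step 2 is carried out along one of these lines, the proof is incomplete at its central point.
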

\begin{proof}
	See Section~\ref{sec:proof_delay}
\end{proof}
This theorem not only establishes a delay performance guarantee for many previous local-memory based policies (e.g., LSQ in~\cite{anonymous2018load}, low-message policies in~\cite{jonatha2018power,van2019hyper}), but provides us with the flexibility to design new delay optimal load balancing for different scenarios with heterogeneous servers and multiple dispatchers, as discussed in the next section. More importantly, our results directly suggest that it is possible to use only delayed information to achieve delay optimality, which resolves one of the open problems listed in~\cite{lipshutz2019open}.

\textbf{High-level proof idea.} We end this section by providing drift-based intuitions behind the technical proofs. In particular, let us consider two queueing systems: a local-estimation system and the actual system (i.e., queue lengths at servers). For throughput optimality, it requires the actual system to have a drift towards the origin. First, by the definition of a tilted dispatching strategy, it provides an equivalent drift on the local-estimation system that is towards the origin. Then, the condition on the update strategy guarantees that the local-estimation system is not too far away from the actual system. Hence, the actual system also has a drift towards the origin. The heavy-traffic delay optimality not only requires a drift towards the origin, but also needs a drift towards the line that all the queue lengths are equal. First, by the definition of a $\delta$-tilted dispatching strategy, there is a drift towards the line that all the local estimates are equal within a given dispatcher. Then, by the condition for the update strategy, the drift on the local-estimation system can be transfered to a drift on the actual system, and hence delay optimality. Note that, in the current proof, in order to make this `drift-transfer' process valid, we impose the condition that both $\delta$ and $p$ are independent of $\epsilon$, which is not necessarily required and both of them could possibly be a particular function of $\epsilon$ as in~\cite{zhou2018flexible}. This relaxation could be an interesting future research direction.

\subsection{Examples}
To illustrate the applications of Theorems~\ref{thm:throughput} and~\ref{thm:delay}, in this section, we introduce examples of LED policies that are both throughput optimal and heavy-traffic delay optimal. The flexibility provided by our sufficient conditions not only allows us to include previous policies as special cases, but enables us to design new flexible policies.

Let us first introduce some typical $\delta$-tilted dispatching strategies.
\begin{example}[Local--Join-Shortest-Queue (L-JSQ)]
	At the beginning of each time-slot $t$, the dispatcher forwards its arrivals to the server with the shortest local estimation with ties broken arbitrarily. That is, consider dispatcher $m$, the chosen server is $i^* \in \argmin_n\{\widetilde{Q}_n^m\}$ .
\end{example}

This dispatching strategy is the same as that in the LSQ policy in~\cite{anonymous2018load}. By the definition of dispatching preference, we can see that under L-JSQ, $\Delta_1^m(t) = 1-{\mu_{\sigma_t(1)}}/{\mu_{\Sigma}} > 0$ and $\Delta_n^m(t) = -\mu_{\sigma_t(n)}/{\mu_{\Sigma}} < 0$. Hence, it is $\delta$-tilted even for heterogeneous servers with $\delta = \mu_{min}/\mu_{\Sigma}$ where $\mu_{min} = \min_n\mu_n$.

Instead of always joining the server with the shortest local estimate, it is also possible to join a sever whose queue length is below a threshold while satisfying the condition of $\delta$-tilted dispatching preference.
\begin{example}[Local--Join-Below-Average (L-JBA)]
	At the beginning of each time-slot $t$, the dispatcher forwards its arrivals to a randomly chosen server whose local estimate is below or equal to the average local queue length estimation. That is, consider dispatcher $m$ with the average local estimate being $\bar{Q}^m(t) = \frac{1}{N}\sum_n\widetilde{Q}_n^m(t)$. Let $\mathcal{A} \triangleq \{n: \widetilde{Q}_n^m(t) \le \bar{Q}^m(t)\}$. Then, for each $i\in \mathcal{A}$, $P_i^m(t) = \mu_i/\sum_{n\in\mathcal{A}}\mu_n$, and for $i \notin \mathcal{A}$, $P_i^m(t) = 0$.
\end{example}

It can be easily shown from the definition that L-JBA is also $\delta$-tilted. Note that, compared to L-JSQ, in the heterogeneous case, it needs the dispatcher to know the service rate of each server, which can be easily obtained by the update strategies introduced next. This strategy is more flexible than L-JSQ since it does not require new arrivals to be only sent to the server with the shortest local estimate, which could be used in the scenarios with data locality. Moreover, some randomness in the dispatching strategy is also useful, as discussed in the next section. 

Further, it is possible to generalize many previous heavy-traffic delay optimal policies into the LED framework. For example, we can directly apply the Power-of-$d$ policy as our dispatching strategy.
\begin{example}[Local--Power-of-$d$ (L-Pod)]
	At the beginning of each time-slot $t$, the dispatcher randomly chooses $d \ge 2$ servers and sends arrivals to the server that has the shortest local estimation among the $d$ servers. 
\end{example}

It can be easily shown that L-Pod is tilted for homogeneous servers. Moreover, for a given $m$, we have $\Delta_1^m(t) = \frac{d-1}{N}$ and $\Delta_N^m(t) = -\frac{1}{N}$, and hence it is $\delta$-tilted with $\delta = \frac{1}{N}$.

Now, let us turn to discussing update strategies that satisfy the condition in Theorem~\ref{thm:delay}. In particular, the update strategy can either be push-based (dispatcher samples servers) or pull-based (servers report to dispatchers). 

\begin{definition}[Push-Update]
	If there are new arrivals, then at the end of the time-slot the dispatcher $m$ samples $d$ distinct servers with a positive probability $\hat{p}$. Then, it updates the corresponding $d$ local estimations with the true values.
\end{definition}

It has been shown in~\cite{anonymous2018load} that even for $d = 1$, the push-update strategy is guaranteed to satisfy the condition in Theorem~\ref{thm:delay}.

\begin{definition}[Pull-Update]
	At the end of each time-slot, for each server $n$ if there are completed tasks, then the server will uniformly at random pick a dispatcher $m$ and then abide by one of the following two rules:
	\begin{itemize}
		\item If the server becomes idle (i.e., no tasks), it sends $(n, 0)$ to dispatcher $m$.
		\item If not, it sends $(n, Q_n)$ to dispatcher $m$ with probability $\hat{p}$.
	\end{itemize}
\end{definition}

It has been shown in~\cite{anonymous2018load} that for any $\hat{p}>0$, the pull-update strategy is guaranteed to satisfy the condition in Theorem~\ref{thm:delay}.

Now, having introduced both the dispatching strategy and the update strategy, we can combine them to obtain different LED policies that are delay optimal in heavy-traffic. For example, we have L-JSQ-Push, L-JSQ-Pull, L-JBA-Push, L-JBA-Pull for heterogeneous servers, as well as L-Pod-Push and L-Pod-Pull for homogeneous servers. 

We end this section by summarizing the contributions of the LED framework. (i) \textbf{It covers previous polices.} L-JSQ-Push (with $\hat{p} = 1$) and L-JSQ-Pull  are the same as LSQ policies considered in~\cite{anonymous2018load}, which include the policies developed in both~\cite{jonatha2018power} and~\cite{van2019hyper} as special cases. Thus, by Theorems~\ref{thm:throughput} and~\ref{thm:delay}, all these policies are throughput and heavy-traffic delay optimal. (ii) \textbf{It allows randomness in dispatching.} The randomness introduced in L-JBA and L-Pod is helpful when dealing with the scenario with an extreme low budget on the message overhead, as discussed next. (iii) \textbf{It enables trade-offs between memory and message overhead.} For example, L-Pod-Push and L-Pod-Pull represent good examples that trade memory for low message overhead. That is, if each dispatcher directly uses the traditional Power-of-$d$ without any memory, then at least $4$ messages needed to guarantee delay optimality in heavy-traffic. In contrast, in both L-Pod-Push and L-Pod-Pull, the \emph{worst-case} message overhead is just $1$ per arrival. 
In addition, the message can be further reduced by choosing a smaller value of $\hat{p}$ in the update strategy.

\section{Discussion}
Before moving to the proofs, we would like to discuss key features and insights about LED, and point out possible refinements on LED.

\vspace{-1.3mm}
\subsection{Key features of LED}
In this section, we highlight the key features of the LED framework, including low message overhead, zero dispatching delay, low computational complexity and appealing performance across various loads.

\textbf{Low message overhead.} It should be noted that the communication overhead occurs only during the update phase in LED policies. For the push-update strategy, the number of messages per arrival is at most $2d$ ($d$ can even be one). For the pull-update strategy, the number of messages per arrival is at most $1$. In contrast, JSQ needs $2N$ messages per arrival and Power-of-$d$ needs at least $4$ messages per arrival. Although JIQ has a comparative worst-case message overhead as LED policies, it is not stable for heterogeneous servers. 

\textbf{Zero dispatching delay.} Another key feature of all LED policies is that there is zero dispatching delay. That is, the dispatcher can immediately route its new arrivals to the chosen server since the decision is made purely based on its local estimations. Moreover, the communication between dispatchers and servers happens only after the decision is made. This is in contrast to typical push-based policies like JSQ and Power-of-$d$, under which the dispatcher has to wait for the response of sampled servers to make its dispatching decision, resulting in a non-zero dispatching delay.

\textbf{Low computational complexity.} In order to implement LED policies, each dispatcher has to keep an array of size $N$ its local estimations. Such a space requirement is negligible in a modern cluster. Further, the operations required by dispatching strategies of LED policies are very efficient. For example, in order to find the server with the minimal local estimate in L-JSQ, we can keep the array in a min-heap data structure. For L-JBA, we can calculate the average by using an efficient running average algorithm. For the simple L-Pod, it only needs random number generators.

\textbf{Appealing performance across loads.} Although the theoretical delay optimality for the LED framework holds in the heavy-traffic asymptotic regime, the family of LED policies includes efficient policies that significantly outperform alternative low-message overhead policies with the same (or even smaller) amount of communications. For example, if the dispatching strategy adopts L-JSQ in LED, then it reduces to the LSQ policy proposed in~\cite{anonymous2018load}, which appeals to enjoy good performance over a wide range of traffic loads in different scenarios via extensive simulations.

As mentioned earlier, the class of heavy-traffic delay optimal LED policies is broad and includes flexible choices of different dispatching and update strategies based on different application scenarios. The actual delay performance (except the heavy-load scenario) varies with the particular choice of dispatching strategy or update strategy under different scenarios. Thus, it is not possible to pick one particular LED policy that fits every circumstance, which is also not the focus of this paper. Instead, it would be useful to present some useful insights about the LED framework, as presented in the following. These insights could serve as the guidance on the choice or design of new LED policies.

\subsection{Useful insights from LED}
The main trait of the LED framework is that only local, possibly delayed and inaccurate information, is used for making the dispatching decision. In the following, we present two useful insights about the use of inaccurate delayed information for load balancing.

\begin{figure}[t]
	\graphicspath{{./Figures/}}
	\centering
	\includegraphics[width=3.5in]{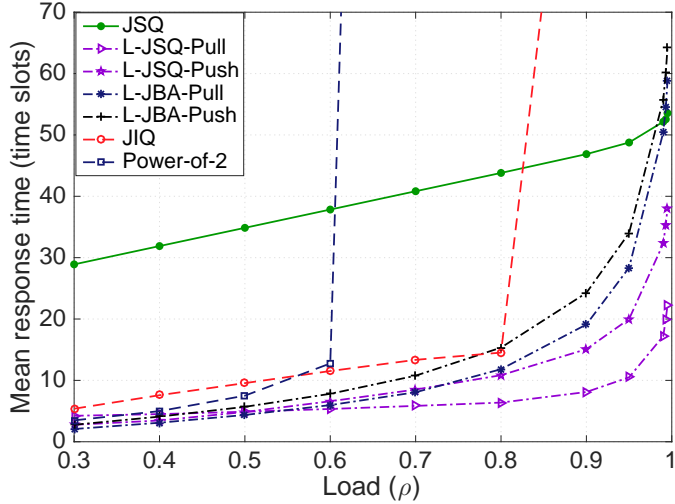}
	\caption{Inaccurate information could improve performance in multiple-dispatcher case.}\label{fig:insight1}
\end{figure}

\textbf{Inaccurate information can improve performance.} A big problem for load balancing with multiple dispatchers is \emph{herd behavior}, which means that arrivals at different dispatchers join the same server. This often leads to a poor delay performance in practice~\cite{suresh2015c3}.  For example, JSQ used in the case of multiple dispatchers leads to a serious herd behavior since all the dispatchers will route arrivals to the single shortest queue. In contrast, under the LED framework, each dispatcher may believe that a different queue is the shortest according to its own local estimates because these estimates are inaccurate and delayed. Thus, jobs at different dispatchers are sent to different queues that may not have the actual shortest length but still have relatively small queue lengths. This intuition is illustrated by Fig.~\ref{fig:insight1}. In particular, we consider a set up with $10$ dispatchers and $100$ heterogeneous servers. All the LED policies are configured to have the same average message overheads as Power-of-$2$. It can be seen that the LED policies are not only stable but achieve a much better performance compared to JSQ, which suffers from the herd behavior in the multiple-dispatcher case.

\textbf{Randomness is useful for heavily-delayed information.} As mentioned earlier, the LED framework provides us with the possibility of exploring load balancing with extremely low message overhead by choosing a small value $\hat{p}$ in the update strategy. As a result, the local information at each dispatcher will only be updated after a long time interval. In this case, if a deterministic dispatching strategy (e.g., L-JSQ) is adopted, it would again incur herd behavior (even for a single dispatcher case) since all the arrivals during the long update interval will join the same queue. This is another motivation for considering  L-JBA and L-Pod, which naturally introduce a certain level of randomness and hence help avoid the herd behavior as suggested by~\cite{mitzenmacher2000useful}. To illustrate this insight, we consider a set up with $10$ dispatchers and $100$ homogeneous servers. We compare the delay performance of L-JSQ-Push, L-Pod-Push and L-JBA-Push with the update probability set to $\hat{p}=0.01$  and $d=2$. As shown in Fig.~\ref{fig:insight2}, both L-JBA-Push and L-Pod-Push outperforms L-JSQ-Push, which suffers from herd behavior because of heavily-delayed information. 
\begin{figure}[t]
	\graphicspath{{./Figures/}}
	\centering
	\includegraphics[width=3.5in]{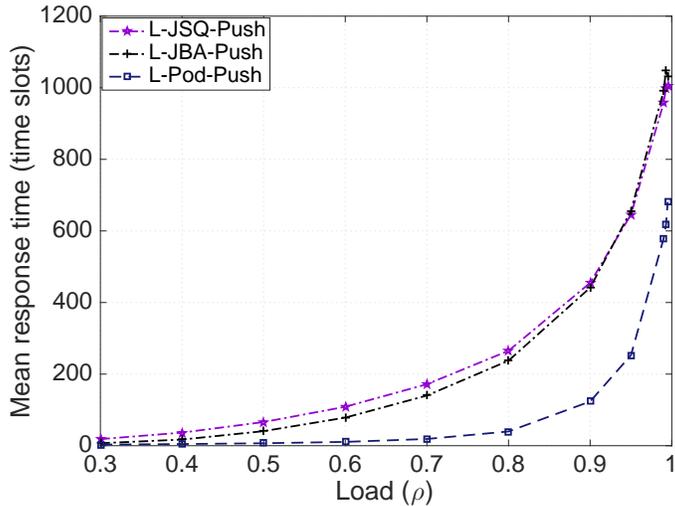}
	\caption{Randomness is useful for heavily-delayed information.}\label{fig:insight2}
\end{figure}

\subsection{Refinements on LED}
Our main results suggest that there is a large class of heavy-traffic delay optimal LED policies. On the one hand, it provides us with flexibility to tailor our policy design for different application scenarios with different choices of dispatching and update strategies. On the other hand, it also suggests the need for refinements on LED beyond delay optimality in heavy-traffic. To this end, we introduce two possible directions for refinements.

\textbf{Degree of queue imbalance.} As introduced in~\cite{zhou2018degree}, \emph{degree of queue imbalance} is a refined metric to further distinguish heavy-traffic delay optimal policies. The idea is that, instead of looking at the average queue length (and hence average delay), the degree of queue imbalance measures the expected difference in queue lengths among the servers. By following the proof of Proposition 5.6 in~\cite{zhou2018degree}, we can establish that the degree of queue imbalance of all heavy-traffic delay optimal LED policies is $O(\frac{1}{\delta^2p^4})$. Thus, even though by Theorem~\ref{thm:delay}, any positive $\delta$ and $p$ are sufficient for delay optimality in heavy-traffic, a dispatching strategy with smaller $\delta$ or a update strategy with a smaller $p$ could affect the performance in practice.

\textbf{Other asymptotic regimes.} In this paper, we focus on the heavy-traffic asymptotic regime where the number of servers is fixed and the load approaches one. As mentioned before, there are also other asymptotic regimes in the analysis of load balancing schemes. One possible direction is to extend the fluid-limit techniques for the large-system regime in~\cite{van2019hyper} to the case of multiple dispatchers and heterogeneous servers. Another alternative regime is the many-server heavy-traffic regime (e.g., Halfin-Whitt regime), which tends to keep a balance between heavy-traffic regime and large-system regime. Studying LED in such a regime is another interesting direction for future work.

\section{Proofs}
In this paper, we extend the Lyapunov drift-based approach developed in~\cite{eryilmaz2012asymptotically} to allow for unbounded supports of arrival and service processes. In particular, we replace the finiteness condition on the drift in~\cite{eryilmaz2012asymptotically} by a stochastically dominated condition, as shown in (C2) in Lemma~\ref{lem:basis}.
As proved in~\cite{hajek1982hitting}, this weaker condition, combined with a negative drift condition, can still guarantee finite moment bounds. Besides a weaker condition, we also replace the one-step drift with a $T$-step drift. Formally, we use the following lemma to derive bounded moments in steady state.
\begin{lemma}
	      \label{lem:basis}
	        For an irreducible aperiodic and positive recurrent Markov chain $\{X(t), t \ge 0\}$ over a countable state space $\mathcal{X}$, which converges in distribution to $\overline{X}$,  and suppose $V: \mathcal{X} \rightarrow \mathbb{R}_{+}$ is a Lyapunov function. We define the $T$ time slot drift of $V$ at $X$ as 
	        \[\Delta V(X)\triangleq [V(X(t_0+T)) - V(X(t_0))] \mathcal{I}(X(t_0) = X),\]
	        where $\mathcal{I}(.)$ is the indicator function. Suppose for some positive finite integer $T$, the $T$ time slot drift of $V$ satisfies the following conditions:

	        \begin{itemize}
	          \item (C1) There exists an $\eta> 0$ and a $\kappa <  \infty$ such that for any $t_0 = 1,2,\ldots$ and for all $X \in \mathcal{X}$ with $V(X)\ge \kappa$, 
	          \[\mathbb{E}\left[\Delta V(X) \mid X(t_0) = X\right]\le -\eta.\]
	          \item (C2) $|\Delta V(X)| \prec W$ for all $t_0$ and all $X \in \mathcal{X}$, and $\ex{e^{\theta W}} = D$ is finite for some $\theta > 0$,
	        \end{itemize}

	        Then $\{V(X(t)), t\ge0\}$ converges in distribution to a random variable $\overline{V}$ for which there exists a $\theta^* >0$ and a $C^* < \infty$ such that 
	        \begin{align*}
	        	\ex{e^{\theta^* \overline{V}}} \le C^*,
	        \end{align*}
	        which directly implies that all the moments of $\overline{V}$ exist and are finite.
\end{lemma}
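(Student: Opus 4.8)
The statement is a $T$-step, light-tailed-increment version of the classical geometric Foster--Lyapunov (Hajek-type) drift criterion, so the plan is to follow the standard three-move argument: first, turn the negative-drift condition (C1) together with the stochastic-domination condition (C2) into an \emph{exponential} supermartingale-type inequality for the sampled process $V_k\triangleq V(X(kT))$; second, iterate this inequality to bound $\limsup_{k\to\infty}\mathbb{E}[e^{\theta^\ast V_k}]$; third, pass to the stationary limit using weak convergence of the chain together with Fatou's lemma.

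The crucial step is the following sub-lemma: there exist $\theta^\ast\in(0,\theta]$ and $\rho\in(0,1)$, depending only on $\eta,\kappa,\theta,D$, such that for every $t_0$ and every $X\in\mathcal{X}$,
\begin{align*}
\mathbb{E}\!\left[e^{\theta^\ast(V(X(t_0+T))-V(X(t_0)))}\,\big|\,X(t_0)=X\right]\le
\begin{cases}
\rho, & \text{if } V(X)\ge\kappa,\\[2pt]
D, & \text{for every } X.
\end{cases}
\end{align*}
To prove it, write $\Delta\triangleq V(X(t_0+T))-V(X(t_0))$ and use the elementary inequality $e^x\le 1+x+\tfrac12 x^2 e^{|x|}$ with $x=\theta^\ast\Delta$. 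Taking conditional expectations, the linear term contributes $\theta^\ast\,\mathbb{E}[\Delta\mid X(t_0)=X]\le -\theta^\ast\eta$ whenever $V(X)\ge\kappa$ by (C1). For the quadratic remainder, since $z\mapsto z^2 e^{sz}$ is nondecreasing on $z\ge 0$ and $|\Delta|\prec W$, condition (C2) gives $\mathbb{E}[\Delta^2 e^{\theta^\ast|\Delta|}\mid X(t_0)=X]\le\mathbb{E}[W^2 e^{\theta^\ast W}]$, and for $\theta^\ast\le\theta/2$ the latter is at most $\big(\sup_{w\ge 0} w^2 e^{-\theta w/2}\big)\,\mathbb{E}[e^{\theta W}]=:B<\infty$, uniformly in $\theta^\ast$. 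Hence $\mathbb{E}[e^{\theta^\ast\Delta}\mid X(t_0)=X]\le 1-\theta^\ast\eta+\tfrac12(\theta^\ast)^2 B$ on $\{V(X)\ge\kappa\}$, which is at most $\rho\triangleq 1-\tfrac12\theta^\ast\eta<1$ once $\theta^\ast\le\min\{\theta/2,\eta/B\}$. The second (``always'') bound is immediate from $e^{\theta^\ast\Delta}\le e^{\theta^\ast|\Delta|}$, $|\Delta|\prec W$, $W\ge 0$ and $\theta^\ast\le\theta$.

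Next I would run the iteration on the chain sampled at multiples of $T$. Start the chain from a fixed state $x_0$, so $a_0\triangleq\mathbb{E}[e^{\theta^\ast V_0}]=e^{\theta^\ast V(x_0)}<\infty$. Conditioning on the $\sigma$-field generated by $X(0),\dots,X(kT)$ and splitting according to whether $V_k\ge\kappa$, the sub-lemma yields
\begin{align*}
\mathbb{E}\!\left[e^{\theta^\ast V_{k+1}}\,\big|\,X(0),\dots,X(kT)\right]
= e^{\theta^\ast V_k}\,\mathbb{E}\!\left[e^{\theta^\ast(V_{k+1}-V_k)}\,\big|\,X(kT)\right]
\le \rho\, e^{\theta^\ast V_k}+D\,e^{\theta^\ast\kappa},
\end{align*}
where on $\{V_k<\kappa\}$ I used $e^{\theta^\ast V_k}D\le D e^{\theta^\ast\kappa}$. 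Taking expectations gives the scalar recursion $a_{k+1}\le\rho a_k+D e^{\theta^\ast\kappa}$, hence $a_k\le\rho^k a_0+D e^{\theta^\ast\kappa}/(1-\rho)$ and $\limsup_{k}a_k\le C^\ast\triangleq D e^{\theta^\ast\kappa}/(1-\rho)$. Finally, since $\{X(t)\}$ is irreducible, aperiodic and positive recurrent, $X(kT)\Rightarrow\overline X$; as $\mathcal{X}$ is countable, $V$ is continuous, so $e^{\theta^\ast V(X(kT))}\Rightarrow e^{\theta^\ast\overline V}$. Applying Fatou for weak convergence of nonnegative variables (i.e. $\mathbb{E}[e^{\theta^\ast\overline V}\wedge m]=\lim_k\mathbb{E}[e^{\theta^\ast V_k}\wedge m]\le\liminf_k a_k\le C^\ast$, then $m\to\infty$ by monotone convergence) gives $\mathbb{E}[e^{\theta^\ast\overline V}]\le C^\ast$, which makes all moments of $\overline V$ finite. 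The only genuinely delicate point is the sub-lemma: one must control the second-order Taylor term \emph{uniformly} in the small parameter $\theta^\ast$ and over all states, which is precisely where the light-tailed domination (C2), rather than a mere finite-variance hypothesis, is needed, and where the freedom to shrink $\theta^\ast$ (instead of keeping $\theta$) is used.
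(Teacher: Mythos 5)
The paper does not prove this lemma itself; it simply cites Hajek (1982) for the fact that the stochastic-domination condition (C2) together with the negative drift (C1) yields exponential (hence all) moment bounds in steady state. Your proof is a correct, self-contained reconstruction of exactly that standard argument — the Taylor-expansion sub-lemma giving $\mathbb{E}[e^{\theta^*\Delta}\mid X]\le 1-\theta^*\eta+\tfrac12(\theta^*)^2B$ with the second-order term controlled uniformly by shrinking $\theta^*\le\theta/2$, the geometric recursion $a_{k+1}\le\rho a_k+De^{\theta^*\kappa}$, and the passage to the stationary limit via truncation and weak convergence — so there is nothing to flag.
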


\subsection{Proof of Theorem~\ref{thm:throughput}}
\label{sec:proof_throughput}
To start with, let us first show that the Markov chain $\{Z(t) = (\Q(t), m(t)), t\ge 0\}$ with $m(t) \triangleq (\widetilde{\Q}^1(t), \widetilde{\Q}^2(t),\ldots, \widetilde{\Q}^m(t))$ is irreducible and aperiodic. Let the initial state be $Z(0) = (\Q(0),m(0)) = (0_{1\times N}, 0_{1\times MN})$ and the state space $\mathcal{Z}$ consists of all the states that can be reached from the initial state. Consider any state $Z$, the queue length vector $\Q$ can reach the initial state with a positive probability since the event that there are no exogenous arrivals and all the offered service is at least one during each time-slot happens with positive probability under our assumptions. Moreover, under the condition for the update strategy given by Eq.~\eqref{eq:throughput_condition}, the event that $\Q$ remains as the initial state while all $\widetilde{\Q}^m$ reach to the initial state happens with a positive probability. Therefore, any state in the state space can reach the initial state, and hence the Markov chain is irreducible. The aperiodicity of the Markov chain comes from the fact that the transition probability from the initial state to itself is positive. 

In order to show positive recurrence, we adopt the Foster-Lyapunov theorem. In particular, we consider the following Lyapunov function $W(Z(t)) = \norm{\Q(t)}^2 + \sum_{m=1}^M\norm{\Q(t) - \widetilde{\Q}^m(t)}_1$, and in the rest of the proof we use $W(t)$ as an abbreviation of $W(Z(t))$ Let $X_n^m(t) \triangleq |Q_n(t) - \widetilde{Q}_n^m(t)|$. The conditional mean drift of $W(t)$ defined as $D(Z(t_0)) \triangleq \ex{W(t_0+T) - W(t_0)\mid Z(t_0)}$ can be decomposed as follows
\begin{align}
\label{eq:drift_all}
	D(Z(t_0)) = D_Q(t_0) + \sum_{m=1}^M\sum_{n=1}^N D_{X^m_n}(t_0) 
\end{align}
where 
\begin{align*}
	D_Q(t_0) &\triangleq \ex{\norm{\Q(t_0+T)}^2 - \norm{\Q(t_0)}^2 \mid Z(t_0)}\\
	D_{X^m_n}(t_0)  &\triangleq \ex{ X_n^m(t_0+T) - X_n^m(t_0)\mid Z(t_0)}
\end{align*}
Let us first consider the tern $D_{X^m_n}(t_0)$. Note that for all $t_0$, $m$ and $n$
\begin{align}
	&\ex{X_n^m(t_0+1) \mid Z(t_0) = Z}\nonumber\\
	\le&\ex{ (1-\mathcal{I}_n^m(t_0))\left(X_n^m(t_0) + A_n(t_0) + S_n(t_0)\right)\mid Z(t_0) = Z}\nonumber\\
	\lep{a}&(1-p)X_n^m(t_0) + \lambda_{\Sigma} + \mu_{max}\label{eq:drift_X}
\end{align}
where (a) follows from the condition in Eq.~\eqref{eq:throughput_condition} and $\mu_{max} = \max_n \mu_n$.
Then, we have (the time reference $t_0$ is dropped for simplicity)
\begin{align}
	&D_{X^m_n}(t_0)\nonumber\\
	=&\ex{\sum_{t=t_0}^{t_0+T-1} X_n^m(t+1) - X_n^m(t) \mid Z(t_0) = Z}\nonumber\\
	=&\sum_{t=t_0}^{t_0+T-1}\ex{ \ex{X_n^m(t+1) - X_n^m(t) \mid Z(t)}\mid Z}\nonumber\\
	\lep{a}&\sum_{t=t_0}^{t_0+T-1}\ex{-pX_n^m(t) + \lambda_{\Sigma} + \mu_{max} \mid Z}\nonumber\\
	\le&-pX_n^m(t_0) + \lambda_{\Sigma} + \mu_{max}\label{eq:D_X},
\end{align}
where (a) follows from Eq.~\eqref{eq:drift_X}.
Let us turn to consider the term $D_Q(t_0)$. By the queue dynamics in Eq.~\eqref{eq:Qdynamic}, 
\begin{align}
\label{eq:driftW}
	&D_Q(t_0)\nonumber\\
	=&\ex{\sum_{t=t_0}^{t_0+T-1} \norm{\Q(t+1)}^2 - \norm{\Q(t)}^2 \mid Z(t_0) = Z}\nonumber\\
	=&\ex{\sum_{t=t_0}^{t_0+T-1} \norm{\Q(t) + \A(t) -\s(t) +\UU(t)}^2 - \norm{\Q(t)}^2 \mid Z}\nonumber\\
	\lep{a}&\ex{\sum_{t=t_0}^{t_0+T-1}\norm{\Q(t) + \A(t) -\s(t)}^2 - \norm{\Q(t)}^2\mid Z}\nonumber\\
	=& \ex{\sum_{t=t_0}^{t_0+T-1}2\inner{\Q(t)}{\A(t)-\s(t)} + \norm{\A(t) -\s(t)}^2\mid Z}\nonumber\\
	\lep{b} & \ex{\sum_{t=t_0}^{t_0+T-1}2\inner{\Q(t)}{\A(t)-\s(t)} + K\mid Z},
\end{align}
where (a) follows from the facts that $Q_n(t) + A_n(t) -S_n(t) + U_n(t) = \max(Q_n(t) + A_n(t) -S_n(t),0)$ for any $t\ge 0$, and $\left(\max(a,0) \right)^2 \le a^2$ for any $a \in \mathbb{R}$; (b) holds by our assumption of light-tailed distributions for the total arrival process and each service process in Eq.~\eqref{ass:light-tail}. In particular, we have that the second moments for total arrival process and service process of each server are finite (independent of $\epsilon$), and hence there exists a finite upper bound $K$ which is independent of the load parameter $\epsilon$.

Now, let us continue to work on Eq.~\eqref{eq:driftW}. In particular, we have 
\begin{align}
	&\ex{\sum_{t=t_0}^{t_0+T-1} \inner{\Q(t)}{\A(t) -\s(t)} \mid Z(t_0) = Z}\nonumber\\
	=&\sum_{t=t_0}^{t_0+T-1} \ex{\ex{\inner{\Q(t)}{\A(t) -\s(t)}\mid Z(t)}\mid Z(t_0) = Z}\nonumber\\
	=& \sum_{t=t_0}^{t_0+T-1}\ex{\ex{\inner{\Q(t)}{\A(t)}\mid Z(t)}\mid Z(t_0) = Z}\label{eq:arrival}\\
	&- \sum_{t=t_0}^{t_0+T-1} \ex{\sum_{n=1}^N Q_n(t)\mu_n \mid Z(t_0) = Z}.\label{eq:service}
\end{align}
For Eq.~\eqref{eq:arrival}, we have 
\begin{align*}
& \sum_{t=t_0}^{t_0+T-1}\ex{\ex{\inner{\Q(t)}{\A(t)}\mid Z(t)}\mid Z(t_0) = Z}\nonumber\\
=&\sum_{t=t_0}^{t_0+T-1}\ex{ \sum_{n=1}^N Q_n(t) \sum_{m=1}^M \ex{A_n^m(t) \mid Z(t)}\mid Z}\nonumber\\
=&\sum_{t=t_0}^{t_0+T-1}\ex{\sum_{n=1}^N Q_n(t)\sum_{m=1}^MP_n^m(t)\lambda_m \mid Z(t_0) = Z}\nonumber\\
\ep{a}&\sum_{t=t_0}^{t_0+T-1}\ex{\sum_{n=1}^N Q_{n}(t)\sum_{m=1}^M\left(\beta_n^m(t) + \frac{\mu_{n} }{\mu_{\Sigma}} \right)\lambda_m \mid Z(t_0) = Z},\nonumber\\
\end{align*}
where (a) follows from the definition of $\beta^m_n(t)$. Then, it can be further simplified as follows.
\begin{align}
& \sum_{t=t_0}^{t_0+T-1}\ex{\ex{\inner{\Q(t)}{\A(t)}\mid Z(t)}\mid Z(t_0) = Z}\nonumber\\
\ep{a}&\sum_{t=t_0}^{t_0+T-1}\ex{\sum_{n=1}^N Q_{n}(t)\sum_{m=1}^M\beta_n^m(t)\lambda_m\mid Z}\nonumber\\
&+ \sum_{t=t_0}^{t_0+T-1}\ex{\sum_{i=1}^N Q_{n}(t)\sum_{m=1}^M\frac{\mu_{n} }{\mu_{\Sigma}}\left(\mu_{\Sigma} - \epsilon\right)p_m\mid Z}\nonumber\\
=&\sum_{t=t_0}^{t_0+T-1}\ex{\sum_{n=1}^N Q_{n}(t)\sum_{m=1}^M\beta_n^m(t)\lambda_m\mid Z}\nonumber\\
 & +\sum_{t=t_0}^{t_0+T-1} \ex{\sum_{n=1}^N Q_n(t)\mu_n \mid Z} - \sum_{t=t_0}^{t_0+T-1} \ex{\sum_{n=1}^NQ_n(t)\frac{\epsilon \mu_n}{\mu_{\Sigma}}\mid Z},\label{eq:sim_arrival}
\end{align}
where in (a), $p_m$ is the probability that arrivals are allocated to dispatcher $m$ (or it can be viewed as the fraction of the total arrivals that are allocated to dispatcher $m$).

Combining Eqs.~\eqref{eq:arrival},~\eqref{eq:service} and~\eqref{eq:sim_arrival}, yields
\begin{align}
	&\ex{\sum_{t=t_0}^{t_0+T-1} \inner{\Q(t)}{\A(t) -\s(t)} \mid Z(t_0) = Z}\nonumber\\
	=&\sum_{t=t_0}^{t_0+T-1}\ex{\sum_{n=1}^N Q_{n}(t)\sum_{m=1}^M\beta_n^m(t)\lambda_m\mid Z(t_0) = Z}\nonumber \\
	&- \sum_{t=t_0}^{t_0+T-1} \ex{\sum_{n=1}^NQ_n(t)\frac{\epsilon \mu_n}{\mu_{\Sigma}}\mid Z(t_0) = Z}\nonumber\\
	=&\sum_{t=t_0}^{t_0+T-1}\ex{\sum_{n=1}^N\sum_{m=1}^M \left(Q_n(t) - \widetilde{Q}_n^m(t) + \widetilde{Q}_n^m(t)\right) \beta_n^m(t)\lambda_m \mid Z}\nonumber\\
	&- \sum_{t=t_0}^{t_0+T-1} \ex{\sum_{n=1}^NQ_n(t)\frac{\epsilon \mu_n}{\mu_{\Sigma}}\mid Z}\nonumber\\
	=&\underbrace{\sum_{t=t_0}^{t_0+T-1}\ex{\sum_{n=1}^N\sum_{m=1}^M \left(Q_n(t) - \widetilde{Q}_n^m(t)\right) \beta_n^m(t)\lambda_m \mid Z}}_\text{$\mathcal{T}_1$}\nonumber \\
	+& \underbrace{\sum_{t=t_0}^{t_0+T-1}\ex{\sum_{n=1}^N\sum_{m=1}^M \widetilde{Q}_n^m(t)\beta_n^m(t)\lambda_m \mid Z}}_\text{$\mathcal{T}_2$} - \underbrace{\sum_{t=t_0}^{t_0+T-1} \ex{\sum_{n=1}^NQ_n(t)\frac{\epsilon \mu_n}{\mu_{\Sigma}}\mid Z}}_\text{$\mathcal{T}_3$}.\nonumber
\end{align}
 We are going to handle each term one by one. To upper bound $\mathcal{T}_1$, we use the following result on $X_n^m(t) = |Q_n(t) - \widetilde{Q}_n^m(t)|$.
\begin{lemma}
\label{lem:boundness}
	Under the condition given by Eq.~\eqref{eq:throughput_condition}, for any $t_0$ and $Z(t_0)$, there exists a finite $T_1$ independent of $\epsilon$ and a finite constant $L$ that is only a function of $p$ and $\mu_{\Sigma}$, such that for all $T \ge T_1$
	\begin{align*}
	 	\ex{ \sum_{t=t_0}^{t_0+T-1} X_m^n(t)\mid Z(t_0) = Z} \le L T
	 \end{align*} 
	 holds for all $m$ and $n$.
\end{lemma}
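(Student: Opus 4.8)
The plan is to turn the one-slot bound in Eq.~\eqref{eq:drift_X} into a geometric recursion for the conditional mean $\ex{X_n^m(t)\mid Z(t_0)=Z}$, sum that bound over the length-$T$ window, and then absorb the leftover transient term into $LT$ by requiring $T$ to be large.

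First I would fix $m$, $n$ and the starting state $Z$, and abbreviate $a_t \triangleq \ex{X_n^m(t)\mid Z(t_0)=Z}$ for $t\ge t_0$, so that $a_{t_0}=X_n^m(t_0)$. Applying Eq.~\eqref{eq:drift_X} at a generic slot $t$ (the bound there holds for every time index, not only $t_0$) and taking a further conditional expectation given $Z(t_0)=Z$ via the tower property yields the scalar recursion $a_{t+1}\le (1-p)a_t + \lambda_{\Sigma}+\mu_{max}$. Unrolling it and summing the geometric series gives, for every $t\ge t_0$,
\[
a_t \le (1-p)^{t-t_0}X_n^m(t_0) + \frac{\lambda_{\Sigma}+\mu_{max}}{p}.
\]
Since the system lies inside the capacity region, $\lambda_{\Sigma}=\mu_{\Sigma}-\epsilon\le\mu_{\Sigma}$, and trivially $\mu_{max}\le\mu_{\Sigma}$, so the second term is at most $2\mu_{\Sigma}/p$, a quantity that depends only on $p$ and $\mu_{\Sigma}$ and, in particular, is independent of $\epsilon$.

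Next I would sum over $t=t_0,\ldots,t_0+T-1$. Using $\sum_{t=t_0}^{t_0+T-1}(1-p)^{t-t_0}\le\sum_{j\ge0}(1-p)^j = 1/p$ (valid since $0<p\le1$), this gives
\[
\ex{\sum_{t=t_0}^{t_0+T-1}X_n^m(t)\,\Big|\,Z(t_0)=Z}\le \frac{X_n^m(t_0)}{p} + \frac{2\mu_{\Sigma}}{p}\,T.
\]
Finally I would set $L\triangleq 3\mu_{\Sigma}/p$, which is a function of $p$ and $\mu_{\Sigma}$ only, and $T_1\triangleq\big\lceil\max_{m,n}X_n^m(t_0)/\mu_{\Sigma}\big\rceil$, which is finite because $Z(t_0)$ is a fixed state and which does not depend on $\epsilon$. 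For every $T\ge T_1$ we then have $X_n^m(t_0)/p\le\mu_{\Sigma}T_1/p\le\mu_{\Sigma}T/p$, so the right-hand side above is at most $(3\mu_{\Sigma}/p)T=LT$, uniformly in $m$ and $n$, which is exactly the claimed bound.

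I do not expect a genuine obstacle here: the argument is essentially a geometric-series estimate. The only two points needing care are (i) expressing the constant $L$ purely through $p$ and $\mu_{\Sigma}$, which forces the crude replacements $\lambda_{\Sigma}\le\mu_{\Sigma}$ and $\mu_{max}\le\mu_{\Sigma}$ rather than carrying $\lambda_{\Sigma}$ along, and (ii) noticing that the residual term $X_n^m(t_0)/p$, which does grow with the (arbitrary) initial state, can still be folded into $LT$ because the lemma only asks for the inequality to hold once $T$ exceeds a threshold $T_1$ that may depend on $Z(t_0)$ but must be free of $\epsilon$.
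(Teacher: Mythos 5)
Your proof is correct, but it takes a genuinely different route from the paper's. The paper treats $\{X_n^m(t),t\ge 0\}$ as an ergodic Markov chain in its own right: it runs essentially your recursion from the zero initial state to show $\ex{X_n^m(t)}\le 2\mu_{\Sigma}/p$ for all $t$, deduces that the stationary distribution $\pi$ has mean (and all moments) bounded independently of $\epsilon$, and then invokes the almost-sure ergodic theorem to claim that the time average $f_T=\frac{1}{T}\sum_{t=t_0}^{t_0+T-1}X_n^m(t)$ is below $L=4\mu_{\Sigma}/p$ for all $T\ge T_1$. Your argument dispenses with ergodicity entirely: you unroll the one-step contraction $a_{t+1}\le(1-p)a_t+\lambda_{\Sigma}+\mu_{max}$ from the \emph{actual} initial state and sum the geometric series. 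This buys two things. First, it is more elementary and sidesteps the weakest steps of the paper's argument ($X_n^m(t)$ alone is not obviously Markov, since its transitions depend on the full state $Z(t)$, and almost-sure convergence of $f_T$ does not by itself produce a deterministic, state-uniform $T_1$ after which $f_T\le L$ with probability one -- the time in the ergodic theorem is random). Second, your intermediate bound $\ex{\sum_{t=t_0}^{t_0+T-1}X_n^m(t)\mid Z(t_0)=Z}\le X_n^m(t_0)/p+2\mu_{\Sigma}T/p$ makes explicit that $T_1$ must depend on the initial state $Z(t_0)$ even though it is free of $\epsilon$; this is unavoidable (the left-hand side is at least $X_n^m(t_0)$, which is unbounded over states) and is consistent with the quantifier order in the lemma statement, whereas the paper's proof leaves this dependence implicit. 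The one caveat worth recording is that where the lemma is later applied with a single fixed $T$ uniformly over states (in the Foster--Lyapunov step and in Proposition~\ref{lem:collapse}), it is really your two-term bound that should be carried along, with the state-dependent term $X_n^m(t_0)/p$ absorbed into the negative drift of $\sum_{m,n}X_n^m(t_0)$ rather than into $LT$.
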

\begin{proof}
	See Appendix~\ref{proof:boundness}.
\end{proof}
By using Lemma~\ref{lem:boundness} with $T \ge T_1$, we have 
\begin{align}
\label{eq:T1}
	\mathcal{T}_1 &\le \lambda_{\Sigma}\sum_{t=t_0}^{t_0+T-1}\ex{\sum_{n=1}^N\sum_{m=1}^M \left|Q_n(t) - \widetilde{Q}_n^m(t)\right| \mid Z}\nonumber\\
	&\le \lambda_{\Sigma} MNL T.
\end{align}

For $\mathcal{T}_2$, we have 
\begin{align}
\label{eq:T2}
	\mathcal{T}_2 &\ep{a} \sum_{t=t_0}^{t_0+T-1}\ex{\sum_{n=1}^N\sum_{m=1}^M \widetilde{Q}_{\sigma_t(n)}^m(t)\Delta_n^m(t)\lambda_m \mid Z}\nonumber\\
	&\lep{b}0,
\end{align}
where (a) comes from the definition of dispatching preference vector $\Delta^m(t)$; (b) holds since dispatching preference is tilted and $\widetilde{Q}_{\sigma_t(1)}^m(t) \le \widetilde{Q}_{\sigma_t(2)}^m(t) \le \ldots \le \widetilde{Q}_{\sigma_t(N)}^m(t)$.

For $\mathcal{T}_3$, we have
\begin{align}
\label{eq:T3}
	\mathcal{T}_3 \ge \frac{\epsilon\mu_{min}}{\mu_{\Sigma}}\norm{\Q(t_0)}_1,
\end{align}
where $\mu_{min} = \min_n \mu_n$. 

Now, combining Eqs.~\eqref{eq:T1},~\eqref{eq:T2} and \eqref{eq:T3}, yields
\begin{align*}
	&\ex{\sum_{t=t_0}^{t_0+T-1} \inner{\Q(t)}{\A(t) -\s(t)} \mid Z(t_0) = Z}\nonumber\\
	\le& -\frac{\epsilon\mu_{min}}{\mu_{\Sigma}}\norm{\Q(t_0)}_1 + \lambda_{\Sigma} MNL T.
\end{align*}
Substituting the result above back into Eq.~\eqref{eq:driftW}, yields
\begin{align}
\label{eq:D_Q}
	D_Q(t_0)\le -2\frac{\epsilon\mu_{min}}{\mu_{\Sigma}}\norm{\Q(t_0)}_1 + 2\lambda_{\Sigma} MNL T + KT.
\end{align}
Now, we are ready to substitute Eq.~\eqref{eq:D_X} and Eq.~\eqref{eq:D_Q} back into Eq.~\eqref{eq:drift_all}. As a result, we have
\begin{align}
	D(Z(t_0)) \le & -2\frac{\epsilon\mu_{min}}{\mu_{\Sigma}}\norm{\Q(t_0)}_1 -p\sum_{m=1}^M\sum_{n=1}^NX_n^m(t_0)\nonumber\\
		& + 2\lambda_{\Sigma} MNL T + KT+ \lambda_{\Sigma} + \mu_{max}\nonumber\\
		\lep{a} & -\xi\left(\norm{\Q(t_0)}_1 + \sum_{m=1}^M\sum_{n=1}^N|Q_n(t_0) - \widetilde{Q}_n^m(t_0)|\right) + K_1,\nonumber
\end{align}
where in (a) $\xi = \min(2\frac{\epsilon\mu_{min}}{\mu_{\Sigma}},p)$ and $K_1 \triangleq 2\lambda_{\Sigma} MNL T + KT+ \lambda_{\Sigma} + \mu_{max}$. Pick any $\alpha>0$ and let 
\begin{align*}
	\mathcal{B} \triangleq \{Z \in \mathcal{Z}: \norm{\Q(t_0)}_1 + \sum_{m=1}^M\sum_{n=1}^N|Q_n(t_0) - \widetilde{Q}_n^m(t_0)| \le \frac{K_1+\alpha}{\xi} \}.
\end{align*}
Then, $\mathcal{B}$ is a finite subset. For any $Z\in \mathcal{B}^c$, $D(Z) \le -\alpha$, and for any $Z\in \mathcal{B}$, $D(Z) \le K_1$. By Foster-Lyapunov theorem, we have established positive recurrence. 

Having shown that the Markov chain $\{Z(t), t\ge 0\}$ is ergodic, we are left with the task of showing that all the moments are finite in steady-state. In order to do so, we use Lemma~\ref{lem:basis}. In particular, we choose the Lyapunov function as $V(Z^{(\epsilon)}) = \norm{\Q^{(\epsilon)}}$ and then verify the two conditions. In the following, the superscript $^{(\epsilon)}$ will be omitted for ease
of notations. To verify condition (C2), we have
\begin{align}
     |\Delta V(Z)| &= | \norm{\Q(t_0+T)} - \norm{\Q(t_0)} | \mathcal{I}(Z(t_0) = Z)\nonumber\\
          & \lep{a} \norm{\Q(t_0+T) - \Q(t_0)}\mathcal{I}(Z(t_0) = Z)\nonumber\\
          &\le \sum_{t=t_0}^{t_0+T-1}\norm{\Q(t+1) - \Q(t)}\mathcal{I}(Z(t_0) = Z)\nonumber\\
          &\le \sum_{t=t_0}^{t_0+T-1}\norm{\A(t)-\s(t)+\UU(t)}\mathcal{I}(Z(t_0) = Z)\nonumber\\
          &\lep{b} \sum_{t=t_0}^{t_0+T-1}\left(\norm{\A(t)} + 2\norm{\s(t)}\right)\mathcal{I}(Z(t_0) = Z),\label{eq:Qdriftbound}
  \end{align}
  where (a) holds since $| \norm{\mathbf{x}} - \norm{\mathbf{y}} | \le \norm{\mathbf{x} - \mathbf{y}}$ for each $\mathbf{x}$, $\mathbf{y}$ in $\mathbb{R}^N$. (b) follows from triangle inequality and the fact that $U_n(t) \le S_n(t)$ for all $t$ and $t$. Then, by our assumptions of light-tailed distributions for both total arrival and service processes, there exists a random variable $W$ such that $|\Delta V(X)| \prec W$ for all $t_0$ and all $X \in \mathcal{X}$, and $\ex{e^{\theta W}} = D$ is finite for some $\theta > 0$, which verifies (C2).

  For (C1), we have 
  \begin{align*}
    &\ex{ \Delta V(Z) \mid Z(t_0) = Z}\\
    = & \ex{\norm{\Q(t_0 + T)} - \norm{\Q(t_0)}  \mid Z(t_0) = Z}\\
    = & \ex{\sqrt{\norm{\Q(t_0 + T)}^2} - \sqrt{\norm{\Q(t_0)}^2} \mid Z(t_0) = Z }\\
    \lep{a} & \frac{1}{2\norm{\Q(t_0)}} \ex{\norm{\Q(t_0 + T)}^2 - \norm{\Q(t_0)}^2 \mid Z(t_0) = Z}\\
    \lep{b} & -\epsilon\frac{\mu_{min}}{\mu_{\Sigma}} + \frac{2\lambda_{\Sigma} MNL T + KT}{2\norm{\Q(t_0)}},
  \end{align*}
  where (a) follows from the fact that $f(x) = \sqrt{x}$ is concave; (b) comes from Eq. \eqref{eq:D_Q}. Thus, condition (C1) is valid and hence the proof of Theorem~\ref{thm:throughput} is complete.

\subsection{Proof of Theorem~\ref{thm:delay}}
\label{sec:proof_delay}
In order to prove the result, we need two intermediate results. One is called $\emph{state-space collapse}$ as stated in Proposition~\ref{lem:collapse}, which is the key ingredient for establishing heavy traffic delay optimality. Roughly speaking, it means that the multi-dimension space for the queue length vector reduces to one dimension in the sense that the deviation from the line (on which all the queue lengths are equal) is bounded by a constant, independent of $\epsilon$. Another  intermediate result is concerned with unused service. Based on these two intermediate results, we can prove heavy-traffic delay optimality. We omit the time reference $t_0$ for simplicity when necessary.

\begin{proposition}
    \label{lem:collapse}
    	Under the conditions in Theorem~\ref{thm:delay}, then we have that $\Qc$ is bounded in the sense that in steady state there exists finite constants $\{L_r, r \in \mathbb{N}\}$ independent of $\epsilon$ such that 
    	\begin{equation*}
    		\ex{\norm{\overline{\Q}_\perp^{(\epsilon)} } ^r} \le L_r
    	\end{equation*}
    	for all $\epsilon \in (0,\epsilon_0)$ and $r \in \mathbb{N}$, where $\Qc = \Q - \inner{\Q}{\mathbf{c}}\mathbf{c}$ is the perpendicular component of $\Q$ with respect to the line $\mathbf{c} = \frac{1}{\sqrt{N}}(1,1,\ldots,1)$.
\end{proposition}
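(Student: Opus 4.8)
The plan is to establish state-space collapse by applying the moment bound of Lemma~\ref{lem:basis} to the Lyapunov function $V_\perp(Z) \triangleq \norm{\Qc}$, where $\Qc$ is the projection of $\Q$ onto the subspace orthogonal to $\mathbf{c}$. The key is to show a negative $T$-step drift of $\norm{\Qc}$ whenever $\norm{\Qc}$ is large enough, with the drift constant independent of $\epsilon$. Following the standard decomposition used for drift-based state-space collapse arguments (as in~\cite{eryilmaz2012asymptotically,maguluri2014heavy,zhou2017designing}), I would write $\norm{\Q(t)}^2 = \norm{\Qp(t)}^2 + \norm{\Qc(t)}^2$ and use the concavity of $\sqrt{\cdot}$ together with $\ex{\norm{\Qc(t_0+T)}} - \norm{\Qc(t_0)} \le \frac{1}{2\norm{\Qc(t_0)}}\ex{\norm{\Qc(t_0+T)}^2 - \norm{\Qc(t_0)}^2 \mid Z(t_0)}$ to reduce everything to bounding the $T$-step drift of $\norm{\Qc}^2$. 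Expanding the squared-norm drift via the queue dynamics~\eqref{eq:Qdynamic} and dropping the unused-service term appropriately (using $\langle \Qc, \UU \rangle \ge -\text{something controllable}$, since $U_n(t)>0$ only when $Q_n(t)$ is small), the dominant cross term becomes $2\sum_{t}\ex{\inner{\Qc(t)}{\A(t)-\s(t)} \mid Z(t_0)}$, plus a bounded term from $\norm{\A(t)-\s(t)+\UU(t)}^2$ that is $O(T)$ and $\epsilon$-independent by the light-tailed assumption~\eqref{ass:light-tail}.

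The heart of the argument is showing this cross term is bounded above by $-\theta \norm{\Qc(t_0)} + (\text{const})$ for some $\epsilon$-independent $\theta>0$. The $-\s(t)$ contribution gives, after projection, roughly $-\sum_n (Q_n(t)-\bar Q(t))\mu_n$-type terms that push toward the line; the subtle part is the arrival term $\inner{\Qc(t)}{\A(t)}$. Here I would condition on $Z(t)$, use $\ex{A_n^m(t)\mid Z(t)} = P_n^m(t)\lambda_m = (\beta_n^m(t) + \mu_n/\mu_\Sigma)\lambda_m$, note that the $\mu_n/\mu_\Sigma$ piece combines with the service term to produce a net negative drift proportional to $\epsilon$ in the $\Qp$ direction but is \emph{orthogonal} to $\Qc$ (it vanishes after projection up to lower-order terms), and that the $\beta_n^m(t)$ piece — via the definition of the dispatching preference $\Delta^m(t)$ and the $\delta$-tilted property — contributes a term of the form $\sum_m \lambda_m \sum_n \widetilde{Q}^m_{\sigma_t(n)}(t)\Delta_n^m(t)$, which is strictly negative and bounded away from zero (scaling like $-\delta$ times the spread of $\widetilde{\Q}^m$) whenever the local estimates $\widetilde{\Q}^m$ are not near the line. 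This is exactly where $\delta$-tilted (as opposed to merely tilted) is needed.

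The main obstacle — and the reason this requires genuinely new techniques beyond~\cite{zhou2017designing} — is transferring the negative drift from the \emph{local-estimation} vectors $\widetilde{\Q}^m$ to the \emph{actual} queue vector $\Q$. The $\delta$-tilted property only directly yields a drift toward the line on which all $\widetilde{Q}^m_n$ are equal, for each fixed dispatcher $m$; but the Lyapunov function is built from $\Q$, not $\widetilde{\Q}^m$. To bridge this, I would invoke the update condition~\eqref{eq:delay}: because each local estimate is refreshed to the true value with probability at least $p$ every slot, $\ex{|Q_n(t)-\widetilde{Q}^m_n(t)|}$ stays $O(1)$ (this is essentially Lemma~\ref{lem:boundness}, and one needs its second-moment/exponential analogue here), so $\widetilde{Q}^m_{\sigma_t(n)}(t)$ can be replaced by $Q_{\sigma_t(n)}(t)$ up to an additive error that is $\epsilon$-independent and controllable in expectation. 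One must be careful that $\sigma_t$ is the permutation sorting $\widetilde{\Q}^m$, not $\Q$, so there is a further discrepancy term comparing $\sum_n Q_{\sigma_t(n)}(t)\Delta_n^m(t)$ with the analogous sorted-by-$\Q$ sum; this too is bounded by $\delta$-independent constants times $\sum_{n}|Q_n(t)-\widetilde{Q}^m_n(t)|$ using a rearrangement-inequality-style estimate. Collecting all error terms, the net $T$-step drift of $\norm{\Qc}^2$ is at most $-\theta' T \norm{\Qc(t_0)} + \Theta(T)$ once $T$ is chosen large enough (absorbing transient effects of the memory dynamics over $T$ slots, as in the throughput proof), which after the $\sqrt{\cdot}$ step verifies (C1); condition (C2) follows verbatim from~\eqref{eq:Qdriftbound} and the light-tailed assumption. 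Lemma~\ref{lem:basis} then gives finite exponential moments of $\norm{\overline{\Qc}}$, hence all polynomial moments $L_r$, completing the proof.
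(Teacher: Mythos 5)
Your overall architecture coincides with the paper's: apply Lemma~\ref{lem:basis} to $V_\perp(Z)=\norm{\Qc}$, verify (C2) from the light-tailed assumption exactly as in Eq.~\eqref{eq:Qdriftbound} plus the non-expansiveness of the projection, reduce (C1) to bounding $\sum_{t}\ex{\inner{\Qc(t)}{\A(t)-\s(t)}\mid Z(t_0)=Z}$, split the arrival term via $P_n^m\lambda_m=(\beta_n^m+\mu_n/\mu_\Sigma)\lambda_m$, and absorb the residual service/arrival term by restricting to small $\epsilon$ (this is where $\epsilon_0$ comes from; in the heterogeneous case that term does not vanish under projection but is $O(\epsilon T\norm{\Qc(t_0)})$ and is dominated once $\epsilon<\mu_\Sigma\delta p^2/(2NT+2\delta p^2)$). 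Where you genuinely diverge from the paper is the drift-transfer step. The paper decomposes $Q_{\perp,n}=(Q_n-\widetilde{Q}_n^m)+(\widetilde{Q}_n^m-\bar{Q}^m)+(\bar{Q}^m-Q_{\text{avg}})$, dispatches the two discrepancy pieces with Lemma~\ref{lem:boundness}, and extracts the negative drift from the middle piece, which the $\delta$-tilted property bounds by $\delta(\widetilde{Q}^m_{min}(t)-\widetilde{Q}^m_{max}(t))$. Rather than converting this to actual queue lengths by adding and subtracting discrepancies as you propose, the paper (Lemma~\ref{claim:neg_drift}) conditions on the event that the server with the largest \emph{true} queue is updated at the end of slot $t_0$ and the one with the smallest true queue at the end of slot $t_0+1$ — an event of probability at least $p^2$ by Eq.~\eqref{eq:delay}; on that event one has the \emph{pathwise} inequalities $\widetilde{Q}^m_{max}(t_0+1)\ge Q_{max}(t_0+1)$ and $\widetilde{Q}^m_{min}(t_0+2)\le Q_{min}(t_0+2)$, and Claim~\ref{clm:bound} then relates these to $Q_{max}(t_0)$ and $Q_{min}(t_0)$, yielding $-\tfrac{\lambda_\Sigma\delta p^2}{\sqrt{N}}\norm{\Qc(t_0)}+C$ for \emph{every} conditioning state $Z$. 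Your worry about the permutation $\sigma_t$ sorting $\widetilde{\Q}^m$ rather than $\Q$ is moot on this route, since only the extremal values $\widetilde{Q}^m_{min},\widetilde{Q}^m_{max}$ are ever used.

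The reason the paper takes this detour, and the weak point of your substitution argument, is uniformity in the conditioning state. Condition (C1) must hold for all $Z$ with $V_\perp(Z)\ge\kappa$, and the discrepancies $|Q_n(t_0)-\widetilde{Q}_n^m(t_0)|$ are coordinates of $Z$ that are unbounded over the state space; there are states with moderate $\norm{\Qc}$ but arbitrarily stale estimates. Your replacement of $\widetilde{Q}^m_{\sigma_t(n)}$ by $Q_{\sigma_t(n)}$ therefore incurs an error that is not a state-independent constant exactly in the term from which you must extract the negative drift, and a bound of the form $\ex{\sum_t|Q_n(t)-\widetilde{Q}_n^m(t)|\mid Z(t_0)=Z}\le LT$ with $T$ fixed cannot hold uniformly in $Z$. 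The conditioning trick pays a factor $p^2$ in the drift constant precisely to obtain a bound that is insensitive to how wrong the initial estimates are. If you keep your route, you must at minimum confine the discrepancy-dependent error to the lower-order terms (as the paper does for Eqs.~\eqref{eq:second} and~\eqref{eq:third}) and find a state-uniform mechanism for the leading negative term.
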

\begin{proof}
	It suffices to show that $V_{\perp}(Z^{(\epsilon)}) \triangleq \norm{\Qc^{(\epsilon)}}$ satisfies the conditions (C1) and (C2) in Lemma~\ref{lem:basis}. Let us first consider conditions (C2). In particular, we have
	\begin{align}
    \label{eq:boundedQc}
            & |\Delta V_\perp(Z)| \nonumber\\
            = & | \norm{\Qc(t_0+T)} - \norm{\Qc(t_0)} | \mathcal{I}(Z(t_0) = Z)\nonumber \\
            \lep{a} & \norm{\Qc(t_0+T) - \Qc(t_0)}\mathcal{I}(Z(t_0) = Z)\nonumber\\
            = &\norm{\Q(t_0+T) - \Qp(t_0+T) - \Q(t_0) + \Qp(t_0)} \mathcal{I}(Z(t_0) = Z)\nonumber\\
            \lep{b} &\norm{\Q(t_0+T) - \Q(t_0)}  + \norm{\Qp(t_0+T) - \Qp(t_0)} \mathcal{I}(Z(t_0) = Z)\nonumber\\
            \lep{c} & 2\norm{\Q(t_0+T) - \Q(t_0)} \mathcal{I}(Z(t_0) = Z)\nonumber\\
            \lep{d} &2\sum_{t=t_0}^{t_0+T-1}\left(\norm{\A(t)} + 2\norm{\s(t)}\right)\mathcal{I}(Z(t_0) = Z)
    \end{align}
    where the inequality (a) follows from the fact that  $|\norm{{\bf x}} - \norm{{\bf y}}| \le \norm{{\bf x} - {\bf y}}$ holds for any ${\bf x}$, ${\bf y} \in \mathbb{R}^N$; inequality (b) follows from triangle inequality; (c) holds due to the non-expansive property of projection to a convex set; (d) follows from Eq.~\eqref{eq:Qdriftbound}. Then by our assumptions of light-tailed distributions for both total arrival and service processes, there exists a random variable $W$ such that $|\Delta V_{\perp}(X)| \prec W$ for all $t_0$ and all $X \in \mathcal{X}$, and $\ex{e^{\theta W}} = D$ is finite for some $\theta > 0$, which verifies (C2).

	Let us turn to condition (C1). By the proof of Lemma 3.6 in~\cite{zhou2017designing}, it suffices to establish the following result in order to verify (C1). That is, there exists $T>0$, $K_2\ge0$ and $\eta>0$ that are all independent of $\epsilon$, such that for all $t_0$ and $Z\in \mathcal{Z}$
	\begin{align}
	\label{eq:cond_C1}
		 \ex{\sum_{t=t_0}^{t_0+T-1} \inner{\Qc(t)}{\A(t) -\s(t)} \mid Z(t_0) = Z} \le -\eta \norm{\Qc} +K_2
	\end{align}
	holds for all $\epsilon \in (0,\epsilon_0)$. Note that
	\begin{align}
		&\ex{\sum_{t=t_0}^{t_0+T-1} \inner{\Qc(t)}{\A(t) -\s(t)} \mid Z(t_0) = Z}\nonumber\\
		\ep{a}& \sum_{t=t_0}^{t_0+T-1}\ex{\ex{\inner{\Qc(t)}{\A(t)}\mid Z(t)}\mid Z(t_0) = Z}\label{eq:Qc_A}\\
		 &- \sum_{t=t_0}^{t_0+T-1} \ex{\sum_n \mu_nQ_{\perp,n}(t) \mid Z(t_0) = Z},\label{eq:Qc_S}
	\end{align}
	where (a) follows from the tower property of conditional expectation and the fact that $A(t)$ is independent of $Z(t_0)$ given $Z(t)$. Moreover, $Q_{\perp,n}(t)$ denotes the $n$th component of the vector $\Qc(t)$. Now let us first focus on Eq.~\eqref{eq:Qc_A}.
	\begin{align}
		&\sum_{t=t_0}^{t_0+T-1}\ex{\ex{\inner{\Qc(t)}{\A(t)}\mid Z(t)}\mid Z(t_0) = Z}\nonumber\\
		=&\sum_{t=t_0}^{t_0+T-1}\ex{\sum_{n=1}^N Q_{\perp,n}(t)\sum_{m=1}^MP_n^m(t)\lambda_m \mid Z(t_0) = Z}\nonumber\\
		=&\sum_{t=t_0}^{t_0+T-1}\ex{\sum_{n=1}^N Q_{\perp,n}(t)\sum_{m=1}^M\left(\beta_n^m(t) + \frac{\mu_{n} }{\mu_{\Sigma}} \right)\lambda_m \mid Z(t_0) = Z}\nonumber\\
		=&\sum_{t=t_0}^{t_0+T-1}\ex{\sum_{n=1}^N Q_{\perp,n}(t)\sum_{m=1}^M\beta_n^m(t)\lambda_m\mid Z}\nonumber \\
		&+ \sum_{t=t_0}^{t_0+T-1}\ex{\sum_{n=1}^N Q_{\perp, n}(t)\sum_{m=1}^M\frac{\mu_{n} }{\mu_{\Sigma}}\left(\mu_{\Sigma} - \epsilon\right)p_m\mid Z}.\nonumber
	\end{align}
	Combining the result above with Eq.~\eqref{eq:Qc_S}, yields
	\begin{align}
	&\ex{\sum_{t=t_0}^{t_0+T-1} \inner{\Qc(t)}{\A(t) -\s(t)} \mid Z(t_0) = Z}\nonumber\\
		=&\sum_{t=t_0}^{t_0+T-1}\ex{\sum_{n=1}^N Q_{\perp, n}(t)\sum_{m=1}^M\beta_n^m(t)\lambda_m\mid Z}\label{eq:Qbeta}\\
		&+ \sum_{t=t_0}^{t_0+T-1} \ex{\sum_{n=1}^NQ_{\perp,n}(t)\frac{-\epsilon \mu_n}{\mu_{\Sigma}}\mid Z}.\label{eq:Qservice}
	\end{align}
	Note that by definition $Q_{\perp,n}(t) = Q_n(t) - Q_{\text{avg}}(t)$, in which $Q_{\text{avg}}(t)$ is the average queue length among $N$ queues at the beginning of time-slot $t$. Moreover, $Q_{\perp,n}(t)$ can be written as 
	\begin{align}
	\label{eq:decouple}
		Q_{\perp,n}(t) = Q_n(t) - \widetilde{Q}_n^m(t) + \widetilde{Q}_n^m(t) - \bar{Q}^m(t) + \bar{Q}^m(t) - Q_{\text{avg}}(t)
	\end{align}
	for all $m$ and $t$, in which $\bar{Q}^m(t) \triangleq \frac{1}{N}\sum_{n=1}^N \widetilde{Q}_n^m(t)$, i.e., the average queue length estimated by dispatcher $m$ at the beginning of time-slot $t$.
	By utilizing Eq.~\eqref{eq:decouple}, Eq.~\eqref{eq:Qbeta} can be written as 
	\begin{align}
		&\sum_{t=t_0}^{t_0+T-1}\ex{\sum_{n=1}^N Q_{\perp, n}(t)\sum_{m=1}^M\beta_n^m(t)\lambda_m\mid Z}\nonumber\\
		=&{\sum_{t=t_0}^{t_0+T-1}\ex{\sum_{n=1}^N \sum_{m=1}^M \left(\widetilde{Q}_n^m(t) - \bar{Q}^m(t)\right)\beta_n^m(t)\lambda_m\mid Z}}\label{eq:first}\\
		&+ \sum_{t=t_0}^{t_0+T-1}\ex{\sum_{n=1}^N \sum_{m=1}^M \left(Q_n(t)- \widetilde{Q}_n^m(t)\right)\beta_n^m(t)\lambda_m\mid Z}\label{eq:second}\\
   		&+ \sum_{t=t_0}^{t_0+T-1}\ex{\sum_{n=1}^N \sum_{m=1}^M \left(\bar{Q}^m(t)-Q_{\text{avg} }(t)\right)\beta_n^m(t)\lambda_m\mid Z}.\label{eq:third}
	\end{align}
	Our main task now is to upper bound each term above. Let us start with Eq.~\eqref{eq:first}. In particular, we can bound it by using the following result.
	\begin{lemma}
	\label{claim:neg_drift}
		There exist finite positive constants $\eta$ and $C$ such that
		\begin{align*}
			{\sum_{t=t_0}^{t_0+T-1}\ex{\sum_{n=1}^N \sum_{m=1}^M \left(\widetilde{Q}_n^m(t) - \bar{Q}^m(t)\right)\beta_n^m(t)\lambda_m\mid Z}}\le -\eta\norm{\Qc(t_0)} + C
		\end{align*}
		holds for all $T \ge 3$, in which $\eta = \frac{\lambda_{\Sigma}\delta p^2}{\sqrt{N}}$ and $C = 3(\mu_{\Sigma})^2p^2$.
	\end{lemma}
	\begin{proof}
		See Appendix \ref{proof:neg_drift}
	\end{proof}
	For Eqs.~\eqref{eq:second} and \eqref{eq:third}, we can bound both of them by using the result in Lemma~\ref{lem:boundness}, respectively. In particular, for Eq.~\eqref{eq:second}, we have 
	\begin{align}
		&\sum_{t=t_0}^{t_0+T-1}\ex{\sum_{n=1}^N \sum_{m=1}^M \left(Q_n(t)- \widetilde{Q}_n^m(t)\right)\beta_n^m(t)\lambda_m\mid Z}\nonumber\\
		&\le \lambda_{\Sigma}\sum_{t=t_0}^{t_0+T-1}\ex{\sum_{n=1}^N\sum_{m=1}^M \left|Q_n(t) - \widetilde{Q}_n^m(t)\right| \mid Z}\nonumber\\
	&\le \mu_{\Sigma} MNL T.
	\end{align}
	For Eq.~\eqref{eq:third}, we have 
	\begin{align}
		&\sum_{t=t_0}^{t_0+T-1}\ex{\sum_{n=1}^N \sum_{m=1}^M \left(\bar{Q}^m(t)-Q_{\text{avg} }(t)\right)\beta_n^m(t)\lambda_m\mid Z}\nonumber\\
		=&\sum_{t=t_0}^{t_0+T-1}\ex{\sum_{n=1}^N \sum_{m=1}^M \left(\frac{1}{N}\sum_{n=1}^N\left(\widetilde{Q}_n^m(t) - Q_n(t) \right)\right)\beta_n^m(t)\lambda_m\mid Z}\nonumber\\
		\le &\lambda_{\Sigma}\sum_{t=t_0}^{t_0+T-1}\ex{\sum_{n=1}^N \sum_{m=1}^M \frac{1}{N}\sum_{n=1}^N\left|\widetilde{Q}_n^m(t) - Q_n(t) \right|\mid Z}\nonumber\\
		\le &\mu_{\Sigma} MNL T.
	\end{align}
	We have obtained bounds for Eqs.~\eqref{eq:first},~\eqref{eq:second} and~\eqref{eq:third}. Let us turn to focus on Eq.~\eqref{eq:Qservice}, which can be upper bounded by the following result.
	\begin{lemma}
	\label{lem:Qservice}
	For any $t_0$ and $Z$,
	\begin{align*}
		\sum_{t=t_0}^{t_0+T-1} \ex{\sum_{n=1}^NQ_{\perp,n}(t)\frac{-\epsilon \mu_n}{\mu_{\Sigma}}\mid Z(t_0) = Z} \le \epsilon\sqrt{N}T \norm{\Q(t_0)} + K_3,
	\end{align*}
	\end{lemma}
	where $k_3$ is a finite constant independent of $\epsilon$.
	\begin{proof}
		See Appendix~\ref{proof:Qservice}
	\end{proof}

	Now, we are ready to bound the left-hand-side of Eq.~\eqref{eq:cond_C1} by using the bounds for both Eq.~\eqref{eq:Qbeta} and Eq.~\eqref{eq:Qservice}. In particular, we have
	\begin{align}
		&\ex{\sum_{t=t_0}^{t_0+T-1} \inner{\Qc(t)}{\A(t) -\s(t)} \mid Z(t_0) = Z}\nonumber\\
		\le & - \frac{\lambda_{\Sigma}\delta p^2}{\sqrt{N}}\norm{\Qc(t_0)} + C + 2\mu_{\Sigma} MNL T +  \epsilon\sqrt{N}T \norm{\Q(t_0)} + K_3\nonumber\\
		\ep{a} &\left(T\epsilon - \frac{\lambda_{\Sigma}\delta p^2}{{N}}\right) \sqrt{N}\norm{\Qc(t_0)} + K_2 \nonumber\\
		\le & -\frac{\mu_{\Sigma}\delta p^2}{2\sqrt{N} }\norm{\Qc(t_0)} + K_2, \quad \forall \epsilon < \frac{\mu_{\Sigma}\delta p^2}{2NT+2\delta p^2}
	\end{align}
	where (a) follows from $K_2 = C + 2\mu_{\Sigma} MNL T + K_3$, which is independent of $\epsilon$. Hence, this verifies condition (C1) with $\eta =\frac{\mu_{\Sigma}\delta p^2}{2\sqrt{N} }$, which is also independent of $\epsilon$. Combined with condition (C2), we have finished the proof of Proposition~\ref{lem:collapse}.
\end{proof}

Having proved the state-space collapse result, we turn to prove another intermediate result regarding unused service, as stated in the following lemma. In words, this lemma says that in heavy traffic unused service tends to be zero.
\begin{lemma}
	\label{lem:unused_service}
	Under any LED policy, we have 
	\begin{align*}
		\lim_{\epsilon \downarrow 0} \ex{\norm{\overline{\UU}^{(\epsilon)}}^2} = 0.
	\end{align*}
	
\end{lemma}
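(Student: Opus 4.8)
The plan is to relate the unused service $\overline{\UU}^{(\epsilon)}$ to the drift of the sum-of-queues and the throughput deficit $\epsilon$, and then let $\epsilon \downarrow 0$. First I would work with the one-step dynamics $Q_n(t+1) = Q_n(t) + A_n(t) - S_n(t) + U_n(t)$ in steady state, where $\overline{\Q}^{(\epsilon)}$ denotes a stationary version and $\overline{\Q}^{+}$ the state one step later (also stationary). Summing over $n$ and taking expectations in steady state, the left-hand side telescopes: $\ex{\sum_n \overline{Q}_n^{+}} = \ex{\sum_n \overline{Q}_n}$, so $\ex{\sum_n \overline{U}_n^{(\epsilon)}} = \ex{\sum_n S_n} - \ex{\sum_n A_n} = \mu_\Sigma - \lambda_\Sigma^{(\epsilon)} = \epsilon$. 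This already shows $\ex{\norm{\overline{\UU}^{(\epsilon)}}_1} = \epsilon \to 0$, but the lemma asks for the $L^2$ norm, so I need a second moment bound, which is where the work lies.

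To control $\ex{\norm{\overline{\UU}^{(\epsilon)}}^2}$, I would set $U_\Sigma(t) \triangleq \sum_n U_n(t)$ and note $\norm{\UU(t)}^2 \le U_\Sigma(t)^2$, so it suffices to bound $\ex{U_\Sigma^2}$ uniformly and then show it vanishes. The standard device (as in~\cite{eryilmaz2012asymptotically,maguluri2014heavy}) is to expand the square of the resource-pooled dynamics. Write $\Phi(t) \triangleq \sum_n Q_n(t)$; then $\Phi(t+1) = \Phi(t) + A_\Sigma(t) - S_\Sigma(t) + U_\Sigma(t)$ only when $U_\Sigma$ is interpreted carefully, so instead I would use the per-server identity $Q_n(t+1) = Q_n(t) + A_n(t) - S_n(t) + U_n(t)$, multiply by the unused service: since $U_n(t) > 0$ forces $Q_n(t+1) = 0$, we have $Q_n(t+1) U_n(t) = 0$, and also $U_n(t)(Q_n(t) + A_n(t) - S_n(t)) = -U_n(t)^2$ (from $U_n = -(Q_n + A_n - S_n)$ on the event $U_n>0$). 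Squaring the dynamics and taking stationary expectations, the $\ex{\Phi^2}$ terms telescope away, leaving a relation of the form
\begin{align*}
	0 = \ex{\norm{\A - \s}^2} + \ex{\norm{\UU}^2} + 2\ex{\inner{\Q}{\A - \s}} + 2\ex{\inner{\A - \s}{\UU}} + 2\ex{\inner{\Q + \A - \s}{\UU}},
\end{align*}
and using $\inner{\Q + \A - \s + \UU}{\UU} = \inner{\Q^+}{\UU} = 0$ componentwise, the $\UU$ cross-terms collapse so that $\ex{\norm{\UU}^2}$ is bounded by $2\ex{\inner{\Q}{\s - \A}} + \ex{\norm{\s - \A}^2}$ plus lower-order terms. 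The first term is where state-space collapse (Proposition~\ref{lem:collapse}) enters: decomposing $\Q = \Qp + \Qc$, the component along $\mathbf{c}$ contributes $-\epsilon \cdot(\text{something}) \le$ a term that vanishes as $\epsilon \downarrow 0$ after using $\ex{\inner{\Qp}{\s-\A}} \le$ (a multiple of $\epsilon \, \ex{\norm{\overline{\Q}}}$), while the $\Qc$ component is controlled by $\ex{\norm{\overline{\Qc}^{(\epsilon)}}} \le L_1$, a constant independent of $\epsilon$, times $\ex{\norm{\s - \A}}$. Combined with the light-tail assumption~\eqref{ass:light-tail} bounding $\ex{\norm{\s-\A}^2}$ by a constant, this gives $\ex{\norm{\overline{\UU}^{(\epsilon)}}^2} \le$ const$\cdot \epsilon + $ (terms $\to 0$), hence the limit is zero.

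The main obstacle I anticipate is the handling of the cross term $\ex{\inner{\Q}{\s - \A}}$ and showing its $\Qp$-part genuinely vanishes rather than merely staying bounded: naively $\ex{\inner{\Qp}{\s-\A}} \le \mu_\Sigma^{1/2}\ex{\norm{\overline{\Q}^{(\epsilon)}}}$, but $\ex{\norm{\overline{\Q}^{(\epsilon)}}}$ blows up like $1/\epsilon$, so the $\epsilon$-prefactor is essential and I must extract it by writing $\inner{\Qp}{\s - \A}$ using the exact mean drift $\ex{A_n - S_n}$ which sums to $-\epsilon$, i.e., $\ex{\inner{\overline{\Qp}}{\overline{\s} - \overline{\A}}}$ does NOT simply factor — I will need to condition on $Z(t)$, use $\ex{\inner{\Qp(t)}{\A(t) - \s(t)} \mid Z(t)} = -\epsilon \cdot \frac{1}{\sqrt N}\inner{\Q(t)}{\mathbf 1}\cdot(\ldots)$-type identities exactly as in the $\mathcal{T}_3$ computation of Theorem~\ref{thm:throughput}, and then the term becomes $-\epsilon\cdot\Theta(\ex{\norm{\overline{\Q}}_1})$, which is bounded because $\epsilon\,\ex{\norm{\overline{\Q}^{(\epsilon)}}_1}$ is bounded (a consequence of the drift bound in Theorem~\ref{thm:throughput}, e.g.\ \eqref{eq:D_Q}). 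Getting all the bookkeeping of these cross terms right, and in particular making sure every leftover term is either $O(\epsilon)$ or a constant multiplied by something that vanishes, is the delicate part; the rest is routine expansion and invocation of Proposition~\ref{lem:collapse} and the light-tail moment bounds.
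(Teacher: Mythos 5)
Your first step is exactly the paper's: setting the steady-state mean drift of $\norm{\Q(t)}_1$ to zero gives $\ex{\norms{\overline{\UU}^{(\epsilon)}}_1}=\epsilon$. But your route from this $L^1$ identity to the $L^2$ statement has a genuine gap. Squaring the per-server dynamics and using $Q_n(t+1)U_n(t)=0$ yields, in steady state, the exact identity
\begin{align*}
\ex{\norms{\overline{\UU}^{(\epsilon)}}^2} \;=\; \ex{\norms{\overline{\A}-\overline{\s}}^2} \;+\; 2\,\ex{\inner{\overline{\Q}}{\overline{\A}-\overline{\s}}}.
\end{align*}
The first term on the right is a strictly positive constant that does \emph{not} vanish as $\epsilon\downarrow 0$, so to conclude you would have to show that the cross term converges to exactly $-\ex{\norms{\overline{\A}-\overline{\s}}^2}$. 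Your proposed estimates do not do this: the $\Qp$-part of the cross term is $-\epsilon\cdot\Theta(\ex{\norm{\overline{\Q}}_1})$, which you only argue is \emph{bounded}, and the $\Qc$-part is bounded by the constant $L_1$ from Proposition~\ref{lem:collapse}; the bookkeeping as described therefore ends at $\ex{\norms{\overline{\UU}^{(\epsilon)}}^2}\le O(1)$, not $o(1)$. Worse, pinning the cross term down to the required limiting value is essentially equivalent to the heavy-traffic upper bound of Theorem~\ref{thm:delay}, whose proof in this paper \emph{uses} the present lemma (through the condition $\ex{\norm{\overline{\Q}(t+1)}_1\norm{\overline{\UU}(t)}_1}\to 0$), so your argument is circular as well as incomplete.

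The paper closes the $L^1$-to-$L^2$ gap with an elementary truncation that sidesteps all drift analysis: since $0\le U_n(t)\le S_n(t)$ pointwise, for any threshold $S'$ one has $U_n^2(t)\le U_n(t)S_n(t)\le U_n(t)\,S' + S_n^2(t)\,\mathcal{I}(S_n(t)>S')$. Taking steady-state expectations and using $\ex{\overline{U}_n}\le\epsilon$ gives $\ex{\overline{U}_n^2}\le \epsilon S' + \ex{S_n^2(0)\mathcal{I}(S_n(0)>S')}$, and the second term is at most any prescribed $\beta>0$ for $S'$ large by the light-tail assumption~\eqref{ass:light-tail}. Letting $\epsilon\downarrow 0$ and then $\beta\downarrow 0$ finishes the proof; no state-space collapse, no expansion of $\norm{\Q}^2$, and no cross-term cancellation is needed.
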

\begin{proof}
First, we would like to show that under any LED policy,
  \begin{equation}
  \label{eq:unused_one_norm}
    \ex{\norm{\overline{\UU}^{(\epsilon)}}_1} = \epsilon.
  \end{equation}
  To see this, we consider the Lyapunov function $W_1(Z(t)) = \norm{\Q(t)}_1$. Since LED is throughput optimal with all the moments being finite, we have that the mean drift of $W_1(Z(t))$ in steady-state is zero. Then, we have
	\begin{equation*}
	\begin{split}
	  0 = \ex{\norm{\A^{(\epsilon)}}_1 - \norm{\s}_1 + \norm{\overline{\UU}^{(\epsilon)} }_1},
	\end{split}  
	\end{equation*}
	which directly implies the result in Eq.~\eqref{eq:unused_one_norm}.

	Now let us fix $n \in \mathcal{N}$, we have for any $t \ge 0$ and constant $S^{\prime}$
\begin{align*}
	{U}_n^2(t) &\le U_n(t)S_n(t)\\
	& = U_n(t)S_n(t) \mathcal{I}\left(S_n(t) \le S^{\prime} \right) + U_n(t)S_n(t) \mathcal{I}\left(S_n(t) > S^{\prime} \right)\\
	& \le U_n(t) S^{\prime} + S_n^2(t)\mathcal{I}\left(S_n(t) > S^{\prime} \right).
\end{align*}
In steady state, we have 
\begin{align*}
	\ex{\overline{U}_n^2} &\le \ex{\overline{U}_n}S^{\prime} + \ex{S_n^2(\infty)\mathcal{I}\left(S_n(\infty) > S^{\prime} \right)}\\
	&\lep{a} \epsilon S^{\prime} + \ex{S_n^2(0)\mathcal{I}\left(S_n(0) > S^{\prime} \right)}\\
	&\lep{b} \epsilon S^{\prime} + \beta,
\end{align*}
where (a) follows from the fact that $\ex{\norms{\overline{\UU}^{(\epsilon)}}_1} = \epsilon$ and service process is \emph{i.i.d.}; in (b), we choose $S^{\prime}$ such that $\ex{S_n^2(0)\mathcal{I}\left(S_n(0) > S^{\prime} \right)} \le \beta$, which is possible by the exponential decay rate of $S_n(0)$ under the light-tailed assumption. Thus, we have 
\begin{align*}
	\lim_{\epsilon \downarrow 0} \ex{\overline{U}_n^2} \le \beta,
\end{align*}
for any $\beta > 0$. Hence, we have $\lim_{\epsilon \downarrow 0} \ex{\overline{U}_n^2} = 0$ for each $n$, which directly implies our result.
\end{proof}

Now, we are prepared to show that under the conditions in Theorem~\ref{thm:delay}, the system achieves optimal delay in heavy traffic. More specifically, by Lemma 3 in~\cite{zhou2018flexible}, we need only to verify the following condition.
\begin{align}
		\label{eq:necessary}
			\lim_{\epsilon \downarrow 0}\ex{\big\lVert\overline{\Q}^{(\epsilon)}(t+1) \big\rVert_1 \big\lVert\overline{\UU}^{(\epsilon)}(t) \big\rVert_1} = 0.
\end{align}
Let us define $\overline{B}^{(\epsilon)} \triangleq\ex{\norm{\overline{\Q}^{(\epsilon)}(t+1)}_1 \norm{\overline{\UU}^{(\epsilon)} (t)}_1}$. We can bound it as follows.
\begin{equation*}
    \begin{split}
      \overline{B}^{(\epsilon)} & \ep{a} N\ex{\inner{\overline{\UU}^{(\epsilon)}(t)} {-\overline{\Q}^{(\epsilon)}_{\perp}(t+1)}}\\
      & \lep{b} N \sqrt{\ex{\norm{\overline{\UU}^{(\epsilon)}}^2} \ex{\norm{\overline{\Q}^{(\epsilon)}_{\perp}(t+1)}^2}}\\
      & \ep{c} N \sqrt{\ex{\norm{\overline{\UU}^{(\epsilon)}}^2} \ex{\norm{\overline{\Q}^{(\epsilon)}_{\perp}(t)}^2}}\\
      & \lep{d} N \sqrt{\ex{\norm{\overline{\UU}^{(\epsilon)}}^2} L_2},
    \end{split} 
    \end{equation*}
    where the equality (a) comes from the property $Q_n^{(\epsilon)}(t+1)U_n^{(\epsilon)}(t) = 0$ for all $n\in \mathcal{N}$ and all $t\ge0$ and the definition of $\Qc$; the inequality (b) holds due to Cauchy-Schwartz inequality; the equality (c) is true since the distributions of $\overline{\Q}^{(\epsilon)}_{\perp}(t+1)$ and $\overline{\Q}^{(\epsilon)}_{\perp}(t)$ are the same in steady state; (d) follow from the state-space collapse result in Proposition~\ref{lem:collapse}. Finally, by Lemma~\ref{lem:unused_service} and the fact that $L_2$ is independent of $\epsilon$, we have $\lim_{\epsilon \to 0} \overline{B}^{(\epsilon)} = 0$, which finishes our proof.

\section{Conclusion}
We have introduced the Local-Estimation-Driven (LED) framework for load balancing policies in possibly heterogeneous systems with multiple dispatchers. Under this framework, each dispatcher keeps local and possibly outdated estimates of the queue lengths for all the servers, and makes its dispatching decision only based on these local estimates. Communication between dispatchers and servers is only used to update the local estimates. We have established sufficient conditions for LED policies to achieve both throughput optimality and delay optimality in heavy traffic. These sufficient conditions not only establish delay optimality for many previous local-memory based policies, but enable us to tailor the design of new delay optimal policies based on different application requirements. The heavy-traffic delay optimality of LED policies also resolves a recent open problem on the development of load balancing schemes that have only access to delayed information.

In future work, it will be interesting to investigate LED framework in other asymptotic regimes, e.g., the large-system regime and the many-server heavy-traffic regime.

\bibliographystyle{plain}
\bibliography{ref} 

\section*{Appendix}
\appendix
\section{Proof of Lemma~\ref{lem:boundness}}
\label{proof:boundness}
First, we show that the Markov chain $\{X_n^m(t), t\ge0\}$ is ergodic. It is irreducible and aperiodic since any state can reach the initial state $X_n^m(0) = 0$ via update and the initial state can also reach to itself. It is positive recurrent since the expected return time to state $0$ is finite under the condition in Eq.~\eqref{eq:throughput_condition}. By the ergodicity of Markov chain, we have that the limiting distribution $\pi = (\pi_0, \pi_1, \ldots )$ exists and it is also the unique stationary distribution. Here, 
\begin{align*}
      \pi_k = \lim_{n\to\infty} \mathbb{P}_{vk}^n \quad\quad \forall v
\end{align*}
where $\mathbb{P}_{vk}^n$ is the probability of being in state $k$ in $n$ steps, given we are in state $v$ now. Since the limiting distribution is independent of the initial state, we can pick any state to start with. In particular, we pick the initial state as $X_n^m(0) = 0$. By Eq.~\eqref{eq:drift_X}, we have for any $t$
\begin{align}
\label{eq:relation}
	\ex{X_n^m(t+1)} \le (1-p) \ex{X_n^m(t)} + \lambda_{\Sigma} + \mu_{max}.
\end{align}
We conduct an inductive process to establish that $\ex{X_n^m(t)} \le \frac{\lambda_{\Sigma} + \mu_{max}}{p}$ for all $t$. First, the basis is true since $X_n^m(0) = 0$. We assume $\ex{X_n^m(t_0)} \le \frac{\lambda_{\Sigma} + \mu_{max}}{p}$ holds for $t = t_0$, then by Eq.~\eqref{eq:relation}, we have $\ex{X_n^m(t_0+1)} \le \frac{\lambda_{\Sigma} + \mu_{max}}{p}$ also holds and hence completing the proof.
In particular, $\ex{X_n^m(t)} \le \frac{\lambda_{\Sigma} + \mu_{max}}{p} \le \frac{2\mu_{\Sigma}}{p}$ for all $t$. This directly means that the stationary distribution $\pi$ of $\{X_n^m(t), t\ge0\}$ has a finite bound on the mean (independent of $\epsilon$). Similarly, by applying the same step as in Eq.~\eqref{eq:drift_X} combined with the inductive argument, we can obtain that all the moments of the stationary distribution are bounded, because all the moments of the total arrival and each service are bounded (independent of $\epsilon$) by our light-tailed assumption. 

Let $f_{T} \triangleq \frac{1}{T}\left(\sum_{t=t_0}^{t_0+T-1}X_n^m(t)\mid Z(t_0) = Z\right)$. By ergodicity (i.e., time-average = ensemble-average), we have that for any starting point $Z(t_0) = Z$, with probability $1$ such that
    \begin{align*}
      \lim_{T\to\infty}f_{T} = \sum_{k=0}^{\infty}k\pi_k \le \frac{2\mu_{\Sigma}}{p}.
    \end{align*}
As a result, we can find a finite $T_1$ (independent of $\epsilon$ since all the moments of $\pi$ are bounded with independence of $\epsilon$) such that for all $T \ge T_1$, $f_{T} \le \frac{4\mu_{\Sigma}}{p} \triangleq L$ with probability $1$. Therefore, if $T \ge T_1$, 
 \begin{align*}
  \ex{ \sum_{t=t_0}^{t_0+T-1} X_n^m(t)\mid Z(t_0) = Z} \le L T,
\end{align*}
which completes the proof.

\section{Proof of Lemma~\ref{claim:neg_drift}}
\label{proof:neg_drift}
Consider the left-hand-side (LHS) of the inequality in Lemma~\ref{claim:neg_drift}.
\begin{align}
	LHS\lep{a}&\sum_{t=t_0}^{t_0+T-1}\sum_{m=1}^M \lambda_m\delta\ex{ \widetilde{Q}^m_{min}(t) - \widetilde{Q}^m_{max}(t)\mid Z(t_0) = Z}\nonumber\\
	\lep{b}&\sum_{m=1}^M \lambda_m \delta\ex{ \widetilde{Q}^m_{min}(t_0+1) - \widetilde{Q}^m_{max}(t_0+1)\mid Z(t_0) = Z}\nonumber\\
	&+ \sum_{m=1}^M \lambda_m \delta\ex{ \widetilde{Q}^m_{min}(t_0+2) - \widetilde{Q}^m_{max}(t_0+2)\mid Z(t_0) = Z},\nonumber
\end{align}
where (a) follows from the definition of $\delta$-tilted dispatching strategy and the fact $\widetilde{Q}^m_{min}(t) \triangleq \widetilde{Q}_{\sigma_t(1)}^m(t) \le \widetilde{Q}_{\sigma_t(2)}^m(t) \le \ldots \le \widetilde{Q}_{\sigma_t(N)}^m(t) \triangleq \widetilde{Q}^m_{max}(t)$; (b) holds since all the terms in the summation are non-positive and $T \ge 3$. 

Now, we define random variable $\mathcal{I}_{min}^m(t)$ as the indicator function which obtains $1$ if the server with the minimal true queue length at the end of time-slot $t$ is updated by dispatcher $m$. Similarly, $\mathcal{I}_{max}^m(t)$ is defined as an indicator function which obtains $1$ if the server with the maximal true queue length  at the end of time-slot $t$ is updated by dispatcher $m$. In the following, we consider the event that $\mathcal{I}_{max}^m(t_0) = 1$ and $\mathcal{I}_{min}^m(t_0+1) = 1$, i.e., at the end of time-slot $t_0$ the server with the maximal actual queue is updated and at the end of time-slot $t_0+1$ the server with the minimal actual queue is updated. Then, by law of expectation, we have 
\begin{equation}
\label{eq:LHS}
\begin{aligned}
	&LHS\le \sum_{m=1}^M \lambda_m \delta p^2\mathbb{E}\left[\widetilde{Q}^m_{min}(t_0+1) - \widetilde{Q}^m_{max}(t_0+1)\right.\\
	&\left.\widetilde{Q}^m_{min}(t_0+2) - \widetilde{Q}^m_{max}(t_0+2)\mid Z, \mathcal{I}_{max}^m(t_0) = 1, \mathcal{I}_{min}^m(t_0+1) = 1\right]\\
	&\lep{a}\sum_{m=1}^M \lambda_m \delta p^2\mathbb{E}\left[\widetilde{Q}^m_{min}(t_0+1) - {Q}_{max}(t_0+1)\right.\\
	&\left.{Q}_{min}(t_0+2) - \widetilde{Q}^m_{max}(t_0+2)\mid Z, \mathcal{I}_{max}^m(t_0) = 1, \mathcal{I}_{min}^m(t_0+1) = 1\right]\\
	&\lep{b} \sum_{m=1}^M \lambda_m \delta p^2\mathbb{E}\left[ - {Q}_{max}(t_0+1) + {Q}_{min}(t_0+2)\right.\\
	&\left.\mid Z(t_0) = Z, \mathcal{I}_{max}^m(t_0) = 1, \mathcal{I}_{min}^m(t_0+1) = 1\right],
\end{aligned}	
\end{equation}
where (a) follows from the fact that with $\mathcal{I}_{max}^m(t_0) = 1$, $\widetilde{Q}^m_{max}(t_0+1) \ge {Q}_{max}(t_0+1)$ and with $\mathcal{I}_{min}^m(t_0+1) = 1$, $\widetilde{Q}^m_{min}(t_0+2) \le {Q}_{min}(t_0+2)$; (b) holds since $\widetilde{Q}^m_{min}(t_0+1) \le \widetilde{Q}^m_{max}(t_0+1) \le \widetilde{Q}^m_{max}(t_0+2)$. This is because under any LED policy the decrease of local estimation can only happen when the queue is updated. 

In order to relate the queue lengths at time-slots $t_0+1$ and $t_0+2$ to the queue length at $t_0$, we use the following result.
\begin{claim}
\label{clm:bound}
	For any $t_0$, we have 
  \begin{enumerate}
    \item $\ex{{Q}_{min}(t_0+1) \mid Z(t_0) = Z} \le Q_{min}(t_0) + M_1$
    \item $\ex{{Q}_{max}(t_0+1) \mid Z(t_0) = Z} \ge Q_{max}(t_0) - M_1$
  \end{enumerate}
  where $M_1 = \mu_{\Sigma}$.
\end{claim}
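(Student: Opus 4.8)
The plan is to prove each of the two inequalities by a one-slot argument that tracks a single, well-chosen coordinate of the queue-length vector. The key observation is that, conditional on $Z(t_0) = Z$, the index $n^*$ attaining $Q_{n^*}(t_0) = Q_{min}(t_0)$ and the index $n^{**}$ attaining $Q_{n^{**}}(t_0) = Q_{max}(t_0)$ are both deterministic, so it suffices to control $Q_{n^*}(\cdot)$ and $Q_{n^{**}}(\cdot)$ over one time-slot and then compare with the min/max at time $t_0+1$.

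For the first inequality, note that $Q_{min}(t_0+1) \le Q_{n^*}(t_0+1)$ trivially. From the queue dynamics in Eq.~\eqref{eq:Qdynamic} together with the identity $-S_n(t) + U_n(t) = \max\{-Q_n(t) - A_n(t), -S_n(t)\} \le 0$, we obtain $Q_{n^*}(t_0+1) \le Q_{n^*}(t_0) + A_{n^*}(t_0)$. Taking conditional expectation given $Z(t_0) = Z$ and using $\ex{A_{n^*}(t_0)\mid Z(t_0)=Z} = \sum_{m=1}^M P_{n^*}^m(t_0)\lambda_m \le \sum_{m=1}^M \lambda_m = \lambda_{\Sigma} \le \mu_{\Sigma}$ yields $\ex{Q_{min}(t_0+1)\mid Z(t_0)=Z} \le Q_{min}(t_0) + \mu_{\Sigma}$, which is part~1 with $M_1 = \mu_{\Sigma}$.

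For the second inequality, $Q_{max}(t_0+1) \ge Q_{n^{**}}(t_0+1)$, and dropping the nonnegative terms $A_{n^{**}}(t_0)$ and $U_{n^{**}}(t_0)$ in Eq.~\eqref{eq:Qdynamic} gives $Q_{n^{**}}(t_0+1) \ge Q_{n^{**}}(t_0) - S_{n^{**}}(t_0)$. Since the service process is \emph{i.i.d.} and independent of the state, $\ex{S_{n^{**}}(t_0)\mid Z(t_0)=Z} = \mu_{n^{**}} \le \mu_{\Sigma}$, so $\ex{Q_{max}(t_0+1)\mid Z(t_0)=Z} \ge Q_{max}(t_0) - \mu_{\Sigma}$, establishing part~2.

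There is no substantial obstacle here; the argument is a loose one-step drift estimate. The only points requiring care are the sign of the combined term $-S_n(t)+U_n(t)$ (so it can be discarded when upper-bounding $Q_{min}$) versus merely using $U_n(t)\ge0$ and $A_n(t)\ge0$ when lower-bounding $Q_{max}$, and the remark that the extremal indices are frozen once we condition on $Z(t_0)=Z$. The constant $M_1$ is deliberately not optimized, as any bound of order $\mu_{\Sigma}$ suffices for the use of this claim in the proof of Lemma~\ref{claim:neg_drift}.
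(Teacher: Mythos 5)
Your proof is correct and follows essentially the same route as the paper: bound the one-step change of the coordinate that attains the min (resp.\ max) at time $t_0$, then compare with the min (resp.\ max) at time $t_0+1$. The only difference is that you handle the possible change of the extremal index by the direct pointwise domination $Q_{min}(t_0+1)\le Q_{n^*}(t_0+1)$, whereas the paper uses a short proof by contradiction for the same step; your version is, if anything, cleaner.
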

\begin{proof}
  Let us start with the first result. Suppose that the server $i$ is the server with the shortest queue length at time-slot $t_0$. We have 
  \begin{align}
  \label{eq39}
    \ex{{Q}_{i}(t_0+1) \mid Z(t_0) = Z} &= \ex{Q_i(t_0) + A_i(t_0) - S_i(t_0) +U_i(t) \mid Z(t_0)}\nonumber\\
    &\le Q_{i}(t_0) + \max(\lambda_{\Sigma}, \mu_i)\nonumber\\
    &\le Q_{i}(t_0) + \mu_{\Sigma}\nonumber\\
    & = Q_{i}(t_0) + M_1.
  \end{align}
  If at time-slot $t_0+1$, the same server $i$ is still the one with the shortest queue length, then we are done. If not, suppose that some other server $j$ is the one with the shortest queue length at time-slot $t_0+1$. Now, we assume that $\ex{{Q}_{j}(t_0+1) \mid Z(t_0) = Z} > Q_{i}(t_0) + M_1$, and we hope to arrive at a contradiction.

  First, since ${Q}_{j}(t_0+1) \le {Q}_{i}(t_0+1)$, $\ex{{Q}_{j}(t_0+1) \mid Z(t_0) = Z} \le \ex{{Q}_{i}(t_0+1) \mid Z(t_0) = Z}$. Combined with our assumption, we get 
  \begin{align}
  \label{eq40}
    Q_{i}(t_0) + M < \ex{{Q}_{j}(t_0+1) \mid Z(t_0) = Z} \le \ex{{Q}_{i}(t_0+1) \mid Z(t_0) = Z}
  \end{align}
Thus, we can see Eq. \eqref{eq40} contradicts with Eq. \eqref{eq39}. Hence, 
\begin{align*}
	\ex{{Q}_{j}(t_0+1) \mid Z(t_0) = Z} \le Q_{i}(t_0) + M_1,
\end{align*}
which finishes the proof of the first result. Same argument can be applied to prove the second result.
\end{proof}

Now, we are ready to bound Eq.~\eqref{eq:LHS}. First, 
\begin{align}
	&LHS \lep{a} \sum_{m=1}^M \lambda_m \delta p^2\mathbb{E}\left[ - {Q}_{max}(t_0+1) \mid Z(t_0) = Z\right] + \label{eq:32}\\
	&\sum_{m=1}^M \lambda_m \delta p^2\mathbb{E}\left[{Q}_{min}(t_0+2)\mid Z(t_0) = Z,\phi(t_0) \right],\label{eq:33}
\end{align}
where $\phi(t_0) \triangleq \mathcal{I}_{max}^m(t_0) = 1, \mathcal{I}_{min}^m(t_0+1) = 1$. The inequality follows from the fact that given $Z(t_0)$, $Q_{max}(t_0+1)$ is independent of $\mathcal{I}_{max}^m(t_0)$ and $\mathcal{I}_{min}^m(t_0+1)$. By using the bound in Claim~\ref{clm:bound}, we have an upper bound for the term in Eq.~\eqref{eq:32}
\begin{align}
\label{eq:34}
	&\sum_{m=1}^M \lambda_m \delta p^2\mathbb{E}\left[ - {Q}_{max}(t_0+1) \mid Z(t_0) = Z\right]\nonumber\\
	 &\le \sum_{m=1}^M \lambda_m \delta p^2(-Q_{max}(t_0)+M_1).
\end{align}
For the term in Eq.~\eqref{eq:33}, we have 
\begin{align*}
	&\sum_{m=1}^M \lambda_m \delta p^2\mathbb{E}\left[{Q}_{min}(t_0+2)\mid Z(t_0) = Z, \phi(t_0)\right]\\
		\ep{a} &\sum_{m=1}^M \lambda_m \delta p^2\mathbb{E}\left[ \ex{{Q}_{min}(t_0+2) \mid Z(t_0+1)}\mid Z(t_0) = Z,\right. \\
		&\left.\mathcal{I}_{max}^m(t_0) = 1, \mathcal{I}_{min}^m(t_0+1) = 1\right]\\
\end{align*}
where (a) follows from the tower property of conditional expectation and the fact that given $Z(t_0+1)$, ${Q}_{min}(t_0+2)$ is independent of $\phi(t_0)$. Then, it can be upper bounded as follows.
\begin{equation*}
\label{eq:35}
	\begin{aligned}
		&\sum_{m=1}^M \lambda_m \delta p^2\mathbb{E}\left[{Q}_{min}(t_0+2)\mid Z(t_0) = Z, \phi(t_0)\right]\\
		\lep{a}&\sum_{m=1}^M \lambda_m \delta p^2\mathbb{E}\left[Q_{min}(t_0+1) + M_1\mid Z(t_0) = Z,\right. \\
		&\left.\mathcal{I}_{max}^m(t_0) = 1, \mathcal{I}_{min}^m(t_0+1) = 1\right]\\
		\ep{b} &\sum_{m=1}^M \lambda_m \delta p^2\mathbb{E}\left[Q_{min}(t_0+1) + M_1\mid Z(t_0) = Z\right]\\
		\lep{c}&\sum_{m=1}^M \lambda_m \delta p^2(Q_{min}(t_0) + 2M_1),
	\end{aligned}
\end{equation*}
 where (a) comes from the bound in Claim~\ref{clm:bound}; (b) holds since given $Z(t_0)$, $Q_{min}(t_0+1)$ is independent of the event $\mathcal{I}_{max}^m(t_0) = 1, \mathcal{I}_{min}^m(t_0+1) = 1$; (c) holds by the bound in Claim~\ref{clm:bound} again.

Thus, combining the bounds for Eqs.~\eqref{eq:32} and~\eqref{eq:33}, yields
\begin{align}
	LHS &\le \sum_{m=1}^M \lambda_m \delta p^2 (Q_{min}(t_0) -Q_{max}(t_0) + 3M_1 )\nonumber\\
	&=\lambda_{\Sigma}\delta p^2(Q_{min}(t_0) -Q_{max}(t_0) + 3M_1\lambda_{\Sigma}\delta p^2\nonumber\\
	&\le -\frac{\lambda_{\Sigma}\delta p^2}{\sqrt{N}}\norm{\Qc(t_0)} + 3(\mu_{\Sigma})^2p^2,
\end{align}
in which the last inequality follows from the fact that $\norm{\Qc(t_0)} \le \sqrt{N}\left(Q_{max}(t_0)-Q_{min}(t_0)\right)$ and $M_1 = \mu_{\Sigma}$ with $\delta \le 1$. Hence, the proof of Lemma~\ref{claim:neg_drift} is complete.

\section{Proof of Lemma~\ref{lem:Qservice} }
\label{proof:Qservice}
First, note that by Eq.~\eqref{eq:boundedQc}, we have 
\begin{align}
	&\ex{\norm{\Qc(t + 1)} \mid Z(t) = Z}\nonumber\\
	\le &\norm{\Qc(t)} + 2\ex{\norm{\A(t)} + 2\norm{\s(t)} \mid Z}\nonumber\\
	\le &\norm{\Qc(t)}+2\ex{\sqrt{NA_{\Sigma}^2(t)} + 2\sqrt{NS_{\Sigma}^2(t)} \mid Z}\nonumber\\
	\lep{a} & \norm{\Qc(t)}+ 2\sqrt{N(\sigma_{\Sigma}^2 + \mu_{\Sigma}^2)} + 4\sqrt{N(\nu_{\Sigma}^2+\mu_{\Sigma}^2)}\nonumber\\
	\lep{b} & \norm{\Qc(t)} + C_2\label{eq:qcdiff},
\end{align}
where (a) follows from Jensen's inequality for concave function; in (b) $C_2$ is a finite constant independent of $\epsilon$, which holds by our light-tailed assumption.

Now, by using the result above, we have 
\begin{align}
	&\sum_{t=t_0}^{t_0+T-1} \ex{\sum_{n=1}^NQ_{\perp,n}(t)\frac{-\epsilon \mu_n}{\mu_{\Sigma}}\mid Z(t_0) = Z}\nonumber \\
	\lep{a} &\sum_{t=t_0}^{t_0+T-1}\ex{\epsilon \sqrt{N}\norm{\Qc(t)} \mid Z(t_0) = Z}\nonumber\\
	\le &\epsilon\sqrt{N}\left(\sum_{t=t_0}^{t_0+T-1} \norm{\Qc(t_0)} + (t-t_0)C_2\right)\nonumber\\
	\le &T\epsilon\sqrt{N}\norm{\Qc(t_0)} + T^2C_2\nonumber,
\end{align}
where (a) is true since $\norm{\mathbf{x}}_1\le \sqrt{N}\norm{\mathbf{x}}$ for any $\mathbf{x} \in \mathbb{R}^N$. Hence, the proof is complete.

\end{document}